\newtheorem{mydef}{Definition}
\newtheorem{myass}{Assumption}
\newtheorem{myth}{Theorem}
\newtheorem{myremark}{Remark}
\newcommand{\m}{\mathbb}
\newcommand{\s}{\mathcal}
\newcommand*{\dif}{\mathop{}\!\mathrm{d}}
\title{\Large \bf Lyapunov-based reinforcement learning for distributed control with stability guarantee}
\author{
\centerline{\normalsize Jingshi Yao$^{a}$, Minghao Han$^{b}$, Xunyuan Yin$^{a,b,}$\thanks{Corresponding author: Xunyuan Yin. Tel: (+65) 6316 8746. Email: xunyuan.yin@ntu.edu.sg}}
\vspace{5mm}\\
\centerline{\small $^{a}$School of Chemistry, Chemical Engineering and Biotechnology, Nanyang Technological University,}\\
\centerline{\small 62 Nanyang Drive, 637459, Singapore}\\
\centerline{\small $^{b}$ Nanyang Environment and Water Research Institute (NEWRI),}\\
\centerline{\small Nanyang Technological University, 1 CleanTech Loop, 637141, Singapore}\\
}
\date{}
\begin{document}

\maketitle
\setstretch{1.5}

\begin{abstract}

In this paper, we propose a Lyapunov-based reinforcement learning method for distributed control of nonlinear systems comprising interacting subsystems with guaranteed closed-loop stability. Specifically, we conduct a detailed stability analysis and derive sufficient conditions that ensure closed-loop stability under a model-free distributed control scheme based on the Lyapunov theorem. The Lyapunov-based conditions are leveraged to guide the design of local reinforcement learning control policies for each subsystem. The local controllers only exchange scalar-valued information during the training phase, yet they do not need to communicate once the training is completed and the controllers are implemented online. The effectiveness and performance of the proposed method are evaluated using a benchmark chemical process that contains two reactors and one separator.

\noindent{\bf Keywords:} Reinforcement learning, distributed control, stability analysis, Lyapunov method, process control.
\end{abstract}

\section{Introduction}
As modern industrial manufacturing advances to meet growing demands, the complexity and scale of chemical manufacturing processes have increased rapidly \cite{daoutidis2016JPC,CACE2023,yin2017automatica,TCST2019,burg2016large}. In response, the distributed control architecture has emerged as a scalable and flexible framework for advanced process control in large-scale industrial operations \cite{daoutidis2018cace,christofides2013distributed,aiche2019}. Within this framework, a large-scale industrial plant is typically partitioned into multiple interacting subsystems, each of which is managed by a local controller. These controllers coordinate their actions to collaboratively manage plant-wide real-time operations \cite{christofides2013distributed,stewart2010cooperative,liu2011iterative}.
Compared to conventional centralized paradigms, a distributed structure significantly enhances fault tolerance and improves the organizational and maintenance flexibility of the resulting closed-loop system. 
Distributed designs are also highly scalable, facilitating potential expansions of the system \cite{christofides2013distributed,aiche2019,stewart2010cooperative,li2023aiche,yin2021}. 
Moreover, distributed control can be implemented with multiple computing units, and in this way the overall computational efficiency can be improved compared to that of the centralized counterpart \cite{sutton2018reinforcement, espeholt2018impala}.

Model predictive control (MPC) \cite{kumar2012model,rawlings2017model}, as one of the most widely adopted advanced control methodologies, has been extensively leveraged to develop distributed process control solutions. Specifically, to manage complex industrial processes comprising multiple interacting subsystems, various distributed model predictive control (DMPC) algorithms have been proposed \cite{liu2010distributed}. 
In \cite{magni2006stabilizing}, local controllers can address nonlinear systems with decaying disturbances through contraction constraints and ensure closed-loop stability, provided that interactions among local controllers also exhibit a decaying nature.
Two DMPC designs are presented in \cite{liu2010sequential}, with one design utilizing a sequential communication protocol and the other adopting an iterative communication protocol.
Later, in \cite{liu2011iterative}, the presence of asynchronous and delayed measurement issues in chemical processes was addressed by devising an iterative distributed MPC scheme with the adoption of Lyapunov-based control techniques.
The convergence of DMPC to centralized MPC was considered for switched nonlinear systems in \cite{heidarinejad2013distributed}.
In \cite{zhao2023feature},  reduced-order recurrent neural network models are used in Lyapunov-based DMPC to stabilize the nonlinear system. 
Recently, encrypted DMPC has been used in safety-critical situations with the consideration of networked nature \cite{kadakia2023encrypted,kadakia2024encrypted}.
Meanwhile, it is worth noting that the dependence of MPC/DMPC on precise first-principles models has limited their broader applications. 
If the models are inaccurate or incomplete, their control performance may suffer. 
However, developing accurate models is often challenging and sometimes impossible, especially for complex and nonlinear systems.

In recent years, the development of reinforcement learning (RL) has drawn substantial attention in process control due to its potential to control complex systems with superior performance without relying on first-principles models \cite{lee2018machine, shin2019reinforcement, nian2020review, spielberg2017deep}. 
The underlying systems are typically described using the Markov decision process (MDP), where the next state is dependent only on the current state and action, as has been adopted in \cite{ma2019continuous, wang2024control}.  In the early period, Hoskins et al. used RL to optimize control strategies by learning optimal operations through reward and punishment mechanisms with application to process control \cite{hoskins1992process}.
Most recently, in \cite{wang2024control}, control Lyapunov-barrier functions are combined with RL to guarantee the stability and security of the resulting closed-loop systems.
RL-based distributed control is particularly favorable for large-scale model-free industrial management. 
Recently, some results have been obtained in developing RL-based distributed control for large-scale industrial processes. 
For example, in \cite{foerster2016learning}, a communication protocol among local agents and local control strategies are learned simultaneously, enabling agents to cooperate and achieve global goals. 
However, in these studies, agents are typically homogeneous and often possess global information, which may not always be feasible in the context of industrial control. Furthermore, as a paramount concern in industrial control, the stability of the distributed closed-loop system has been overlooked in the existing literature. To the best of the authors' knowledge, how to ensure closed-loop stability of stochastic nonlinear systems in a RL-based distributed control framework remains an open question.

To address the challenges, we propose a Lyapunov-based RL method for distributed control of a general class of stochastic large-scale nonlinear systems. First, we derive sufficient conditions for the closed-loop system to be stable in a model-dree distributed control setting, which seamlessly enables the analysis of closed-loop stability with only data. Based on this, a practical RL-based distributed control algorithm is proposed. Local controllers and Lyapunov functions are parameterized by neural networks and trained to guarantee the stability of the closed-loop system. Furthermore, the algorithm operates with a distributed training and decentralized execution paradigm, which enables minimal communication among subsystems during the training phase and no communication during the execution phase. Finally, the proposed algorithm is applied to a simulated chemical process and compared with competitive baselines.

The remainder of the paper is organized as follows. Section~\ref{sec:pre} presents the basic definitions, the assumptions, and the problem to be investigated. The sufficient conditions for closed-loop stability in a model-free distributed control setting are derived in Section~\ref{sec:theorem}. In Section~\ref{sec:algorithm}, a distributed actor-critic algorithm is established based on these sufficient Lyapunov-based conditions. Following the algorithm, simulations are made on the chemical process and the results are shown in Section~\ref{sec:experiment}. The conclusion and potential future directions are given in Section~\ref{sec:conclusion}.

\section{Preliminaries and problem formulation}\label{sec:pre}

\subsection{Notation}

$\mathbb{R}$ denotes the set of real numbers. $\mathbb{R}^+$ denotes the set of non-negative real numbers. 
$\mathbb{R}^n$ denotes the $n$ dimensional Euclidean space.
$\mathbb{Z}^+$ denotes the set of non-negative integers. 
$\times$ denotes the Cartesian product.
$\odot$ denotes the Hadamard product.
$\mathcal{N}(\mu, \sigma^2)$ represents a Gaussian distribution with mean $\mu\in \mathbb{R}^n$ and standard deviation $\sigma \in\mathbb{R}^n$.
$\text{U}(a,b)$ represents a uniform distribution with $a$ and $b$ as the lower and upper bounds, respectively.

In this paper, we consider the control of a general class of stochastic nonlinear systems characterized by the Markov decision process (MDP) $\s{M}$, wherein the next state of the system $s_{k+1}\in\s{S}$ depends solely on the current state $s_{k}\in\s{S}$ and action $a_{k}\in\s{A}$, where $\s{S} \subseteq \m{R}^n$ denotes the state space; $\s{A} \subseteq \m{R}^m$ denotes the action space; $k\in \m{Z}^+$ denotes the time instant. The transition of state is dominated by the conditional probability density function (PDF) $P(s_{k+1}|s_k,a_k)$, which denotes the conditional probability density of the next state $s_{k+1}$ given the current state $s_k$ and action $a_k$. 
The stage cost $c_k = \mathcal{C}(s_{k})$ measures the instantaneous performance of the state $s_k$.
$\rho(s_0)$ is the PDF that characterizes the distribution of the initial state $s_0$.
The stochastic control policy $\pi(a_k|s_k)$ of MDP is a conditional PDF of the action $a_k$ given the current state $s_k$. 

Given a control policy $\pi$, $P_\pi(s_{k+1}|s_k)\triangleq \int_{a_k\in\s{A}} \pi(a_k|s_k)P(s_{k+1}|a_k,s_k) \dif{a_k}$ denotes the closed-loop state transition PDF.
The conditional probability density $P(s_{k} | \rho, \pi, k)$ of the state $s_k$ given initial distribution $\rho$, control policy $\pi$ and time instant $k$ is defined recursively as follows: 
\begin{subequations}
\begin{align}
& P(s_{0}|\rho, \pi, 0)=\rho(s_{0})  \\[0.3em]
& P(s_{k+1} | \rho, \pi, k+1) = \int_{s_k\in\s{S}} P_\pi(s_{k+1}| s_{k}) P(s_k|\rho, \pi, k) \dif{s_{k}}
\end{align}
\end{subequations}
where $k\in\m{Z}^+$.

\subsection{Subsystem description for the MDP}

Assume the MDP $\s{M}$ can be partitioned into $\nu\in \mathbb{Z}^{+}$ interconnected subsystems. We define $\s{V}:=\big\{1,\ldots,\nu\big\}$. 
For any $i\in\s{V}$, $s^i\in\s{S}^i$ and $a^i \in \s{A}^i$ denote the state and action of the $i$th subsystem, respectively; 
the state space of $\s{M}$ is composed of the state spaces of the subsystems, that is, $\s{S}=\s{S}^1\times \cdots \times \s{S}^\nu$ and $s=[(s^1)^{\top}, \cdots, (s^\nu)^{\top}]^{\top}$;
its action space is also composed of the action spaces of the subsystem, that is, $\s{A}=\s{A}^1\times \cdots \times \s{A}^\nu$ and $a=[(a^1)^{\top}, \cdots, (a^\nu)^{\top}]^{\top}$;
for any $i\in\s{V}$, $s^{-i}_k$ and $a^{-i}_k$ denote the states and actions of subsystems other than those of the $i$th subsystem.
$P^i(s_{k+1}^i|s^i_k, a^i_k)$ is the state transition PDF for the $i$th subsystem, which is not Markov since its value is influenced by $s^{-i}_k$ and $a^{-i}_k$ as \cite{canese2021multi} mentions.
$c_k^i=\s{C}^i(s_k^i)$ is the stage cost for the $i$th subsystem and satisfies $\s{C}(s) = \sum_{i=1}^{\nu}\s{C}^i(s^i)$.
$\rho^i$ specifies the initial state PDF for the $i$th subsystem, satisfying $\rho(s)=\prod_{i=1}^\nu \rho^i(s^i)$.

In this paper, we focus on a full-state tracking problem and denote the reference state as $s_{ref}$, where the stage cost for the $i$th subsystem, $i\in\s{V}$, is designed as the square norm of local state tracking error:
\begin{equation}
\label{eq:cost}
    \s{C}^i(s^i_k) \triangleq \Vert s^i_{k}-s^i_{ref} \Vert^2
\end{equation}
Accordingly, the stage cost of full-state is given by $\s{C}(s_k)=\sum_{i=1}^\nu \s{C}^i(s_k^i)=\Vert s_{k}-s_{ref} \Vert^2$.

\subsection{Problem formulation}

In this paper, we aim to control the stochastic nonlinear system described by $\s{M}$ using a RL-based distributed method with guaranteed closed-loop stability.  For the $i$th subsystem, $i\in \s{V}$, a control policy $\pi^i(a_k^i|s_k^i)$ is learned, and the resulting distributed control policy $\pi_d$ for the entire system is:
\begin{equation}
    \pi_d(a_k|s_k) \triangleq \prod_{i=1}^\nu \pi^i(a^i_k|s^i_k)
\end{equation} 
We use $P^i(s_k^i|\rho^i, \pi^i,k)$ to denote the conditional distribution of state $s_k^i$ with the given initial state distribution $\rho^i$, control policy $\pi^i$ and time instant $k$ for the $i$th subsystem.

The stability criterion, defined following the concept of mean square stability \cite{shaikhet1997necessary, bolzern2010markov, han2020actor}, is presented as follows:
\begin{mydef}
    Consider the MDP $\s{M}$ comprising $\nu$ subsystems. If there exists a distributed control policy $\pi_d$ such that the resulting closed-loop system is ergodic, and the system's stage cost converges to zero, as shown by:
    \[
    \lim_{k\to \infty} \mathbb{E}_{s_k\sim P(\cdot|\rho, \pi_d, k}) \s{C}(s_k)=0
    \]
    then the MDP $\s{M}$ is said to be stable in mean cost under the distributed control policy $\pi_d$.
\end{mydef}

The control problem to be addressed in this paper is to learn a set of local control policies such that the MDP $\mathcal{M}$ is stable in mean cost with the model-free distributed control policy $\pi_d$.

\section{Stability analysis for distributed control}
\label{sec:theorem}

In this section, we present the key theorem for validating the closed-loop stability of the MDP $\s{M}$ controlled by the distributed control policy $\pi_d$. First, we assume that there exists a distributed control policy such that the Markov chain induced is ergodic, as commonly assumed in the existing RL literature \cite{han2020actor, sutton2008convergent, korda2015td, bhandari2018finite}.
\begin{myass}\label{assumption2}
There exists a distributed control policy $\pi_d$ such that the MDP $\s{M}$ controlled by $\pi_d$ is ergodic and exhibits a unique stationary distribution $q_{\pi_d}(s)$, which is defined as:
    \begin{equation}
    \label{eq:ergodic}
        q_{\pi_d}(s) \triangleq \lim_{k\to\infty} P(s|\rho, \pi_d, k)
    \end{equation}
\end{myass}

{
\begin{myremark}
    From a theoretical point of view, an MDP controlled by a policy is ergodic if the closed-loop system is irreducible (i.e., every state can be reached from any other state) and aperiodic (i.e., the system does not get trapped in cycles) \cite{puterman2014markov}. 
    Nonetheless, verifying the ergodicity of an MDP with a control policy is challenging in practice. This property may be approximately assessed by simulating the system over a sufficiently long period and examining the convergence of the state distribution. Statistical tests, such as the Gelman-Rubin diagnostic \cite{TH2008},  can be applied to assess convergence characteristics.
\end{myremark}}

This unique distribution $q_{\pi_d}(s)$ captures the long-term behavior of the ergodic system subject to the distributed control policy $\pi_d$.  
In what follows, we leverage the Lyapunov theorem to establish sufficient conditions for the MDP $\s{M}$ controlled by $\pi_d$ to be stable in mean cost. In well-established control theory, the asymptotic stability of a deterministic system can be guaranteed if an energy-like Lyapunov function of the system keeps decreasing along state trajectories \cite{khalil2009lyapunov, raghunathan2013optimal}. However, such an energy-decreasing condition is unsuitable for stability analysis of stochastic nonlinear systems without explicit models. In \cite{han2020actor}, a stochastic stability theorem for model-free control method is developed based on data.
Nonetheless, previous results are not applicable to distributed control methods.
This paper aims to identify a set of Lyapunov functions $L^i: S^i\to \m{R}^{+}$ for each subsystem and analyze the closed-loop stability in a stochastic manner. 
The sufficient conditions for the MDP $\s{M}$ composed of $\nu$ subsystems to be stable in mean cost are derived as follows:

\begin{myth}
\label{th:lya}
Let Assumption  \ref{assumption2} hold.
The MDP $\s{M}$ controlled by the distributed control policy $\pi_d$ is stable in mean cost
if there exist positive constants $\alpha_{1}$, $\alpha_{2}$, and $\alpha_{3}$, and 
a set of Lyapunov functions $L^i:\s{S}^i \to\mathbb{R}^+$ for each $i\in\s{V}$,  
such that:
\begin{subequations}\label{eq:lyappunov}
\begin{align}
&\alpha_{1}\s{C}^i(s_k^i)\leq L^i(s_k^i)\leq \alpha_{2}\s{C}^i(s_k^i) \label{eq:lyapunov1}\\[0.3em]
&L(s_k)=\sum_{i=1}^\nu L^i(s_k^i) \label{eq:lyapunov2}\\[0.3em]
&\mathbb{E}_{s_k\sim u_{\pi_d}(\cdot),  s_{k+1}\sim P_{\pi_d}(\cdot|s_k)}[L(s_{k+1})-L(s_k)]\leq -\alpha_3\mathbb{E}_{s_k\sim u_{\pi_d}(\cdot)} \s{C}(s_k)\label{eq:lyapunov3}
\end{align}
\end{subequations}
where $u_{\pi_d}(s) \triangleq \lim_{N\to\infty}\frac{1}{N}\sum^{N}_{t=0}P(s|\rho, \pi_d, t)$ is the average probability density of the state $s$.
\end{myth}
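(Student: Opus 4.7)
The plan is to run a stochastic Lyapunov-style argument on the aggregate function $L = \sum_i L^i$, leveraging the invariance of the averaged state distribution $u_{\pi_d}$ that Assumption \ref{assumption2} provides. First I would sum \eqref{eq:lyapunov1} over $i \in \s{V}$ and invoke \eqref{eq:lyapunov2} together with $\s{C}(s) = \sum_i \s{C}^i(s^i)$ to obtain the global sandwich $\alpha_1 \s{C}(s) \le L(s) \le \alpha_2 \s{C}(s)$. This aggregation collapses the per-subsystem hypotheses into a single scalar Lyapunov inequality on the full state, after which the distributed structure of the controller plays no further role in the stability argument and the remaining work reduces to a centralized stochastic analysis in the spirit of \cite{han2020actor}.

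Next I would exploit the fact that, under Assumption \ref{assumption2}, the averaged density $u_{\pi_d}$ coincides with the unique stationary distribution $q_{\pi_d}$ by a standard Ces\`aro-mean argument, and is therefore invariant under the closed-loop kernel $P_{\pi_d}$. Consequently the marginal of $s_{k+1}$ in the joint sampling of \eqref{eq:lyapunov3} is again $u_{\pi_d}$, so that
\[
\mathbb{E}_{s_k \sim u_{\pi_d},\, s_{k+1} \sim P_{\pi_d}(\cdot|s_k)}[L(s_{k+1}) - L(s_k)] = \mathbb{E}_{s \sim u_{\pi_d}} L(s) - \mathbb{E}_{s \sim u_{\pi_d}} L(s) = 0.
\]
Combined with \eqref{eq:lyapunov3} this yields $\alpha_3\,\mathbb{E}_{s \sim u_{\pi_d}} \s{C}(s) \le 0$, and since $\alpha_3 > 0$ and $\s{C} \ge 0$ the stage cost must vanish $u_{\pi_d}$-almost surely, giving $\mathbb{E}_{s \sim q_{\pi_d}} \s{C}(s) = 0$. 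Finally, by the ergodic convergence $P(\cdot|\rho,\pi_d,k) \to q_{\pi_d}$ supplied by Assumption \ref{assumption2}, passing to the limit gives $\lim_{k\to\infty} \mathbb{E}_{s_k \sim P(\cdot|\rho,\pi_d,k)} \s{C}(s_k) = 0$, which is precisely mean-cost stability; the bound $\s{C} \le L/\alpha_1$ from the first step supplies the integrability needed to justify the interchange of limit and expectation.

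The hard part will be the last step. Inequality \eqref{eq:lyapunov3} is a single drift condition stated against the stationary law rather than a per-step supermartingale bound along trajectories, so one cannot simply telescope $\mathbb{E} L(s_k) - \mathbb{E} L(s_{k+1})$ and conclude by summation. The whole argument must therefore be routed through the identification $u_{\pi_d} = q_{\pi_d}$ and the ergodic convergence, with the sandwich bound handling the regularity needed to transfer ``almost-sure zero cost under the stationary law'' into ``expected cost tends to zero along the trajectory.'' This is also the step most sensitive to the strength of Assumption \ref{assumption2}: without ergodicity only a Ces\`aro-averaged version of the conclusion would be available, and the \emph{pointwise-in-$k$} limit required by the mean-cost-stability definition would not follow.
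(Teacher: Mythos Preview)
Your argument is correct, and it reaches the same intermediate target as the paper---namely $\mathbb{E}_{s\sim q_{\pi_d}}\s{C}(s)=0$ after identifying $u_{\pi_d}=q_{\pi_d}$---but by a genuinely different and shorter route. The paper expands $u_{\pi_d}$ as its defining Ces\`aro average, invokes the Dominated Convergence Theorem (using the bound $L\le \alpha_2\s{C}$) to push the limit through the integral, telescopes the resulting sum $\frac{1}{N}\sum_{t=0}^{N}[\,\mathbb{E}_{t+1}L-\mathbb{E}_{t}L\,]$ down to $\frac{1}{N}\big(\mathbb{E}_{N+1}L-\mathbb{E}_{\rho}L\big)$, and then uses $L\ge 0$ together with finiteness of $\mathbb{E}_{\rho}L$ to conclude the limit is $\le 0$. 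You bypass all of this by observing directly that the stationary law is invariant under $P_{\pi_d}$, so the drift term in \eqref{eq:lyapunov3} is identically zero once $u_{\pi_d}=q_{\pi_d}$. Your route is more elementary (no DCT, no telescoping) and makes transparent that \eqref{eq:lyapunov3} is really a constraint on $q_{\pi_d}$ alone; the paper's route, on the other hand, makes the roles of the sandwich constants $\alpha_1,\alpha_2$ and of the non-negativity of $L$ more visible at each step. Both proofs then finish the same way, inferring the pointwise-in-$k$ limit from convergence to stationarity; your remark that \eqref{eq:lyapunov3} is a stationary-law condition rather than a trajectory-wise supermartingale bound, and that ergodicity is therefore indispensable for the final step, is a point the paper leaves implicit.
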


\begin{proof}

If Assumption~\ref{assumption2} is satisfied, then the MDP $\mathcal{M}$ has a unique stationary state probability density $q_{\pi_d}(s)$ as described in (\ref{eq:ergodic}).
According to the Abelian theorem, the state distribution $u_{\pi_d}(s)$ will converge to state distribution $q_{\pi_d}(s)$ as $N\rightarrow \infty$, that is, for any state $s\in \s{S}$, we have $u_{\pi_d}(s)=q_{\pi_d}(s)$. Consequently, (\ref{eq:lyapunov3}) is equivalent to:
$$\mathbb{E}_{s_k\sim u_{\pi_d}(\cdot), s_{k+1}\sim P_{\pi_d}(\cdot|s_k)}[L(s_{k+1})-L(s_k)]\leq -\alpha_3\mathbb{E}_{s_k\sim q_{\pi_d}(\cdot)} \s{C}(s_k)$$
Further, the inequation can be re-written as:
{\small
\begin{equation}\label{eq:lya}
    \int_{s_k\in\s{S}} \lim_{N\to\infty}\frac{1}{N} \sum_{t=0}^N P(s_k|\rho, \pi_d, t) \mathbb{E}_{s_{k+1}\sim P_{\pi_d}(\cdot|s_k)}[L(s_{k+1})-L(s_k)] \dif{s_{k}} \leq -\alpha_3 \m{E}_{s_k\sim q_{\pi_d}(\cdot)} \s{C}(s_k)
\end{equation}}
According to~(\ref{eq:lyapunov1}), the Lyapunov function $L^i$ is bounded by $\alpha_2\s{C}^i$. 
One has:
    $$P(s_k|\rho, \pi_d, t)L(s_k) \leq \alpha_2 P(s_k|\rho, \pi_d, t) \sum_{i=1}^{\nu}\s{C}^i(s_k^i)$$
where the right-hand side is an integrable function. Meanwhile, $\{\frac{1}{N}\sum_{t=0}^N P(s_k|\rho, \pi_d, t)L(s_k)\}$ converges point-wise to $q_{\pi_d}(s_k)L(s_k)$ as $N$ approaches $\infty$. Therefore, according to the Lebesgue Dominated Convergence Theorem \cite{blackwell1963converse}, the left-hand-side of (\ref{eq:lya}) can be re-written as:
    $$\begin{aligned} 
     & \int_{s_k\in\s{S}} \lim_{N\to\infty}\frac{1}{N} \sum_{t=0}^N P(s_k|\rho, \pi_d, t) \mathbb{E}_{s_{k+1}\sim P_{\pi_d}(\cdot|s_k)}[L(s_{k+1})-L(s_k)] \dif{s_{k}}\\
      =&\lim_{N\to\infty}\frac{1}{N}\int_{s_k\in\s{S}}\mathbb{E}_{s_{k+1}\sim P_{\pi_d}(\cdot|s_k)}\sum^{N}_{t=0}[L(s_{k+1})-L(s_{k})]P(s_k|\rho, \pi_d, t)\dif{s_k}\\
     =& \lim_{N\to\infty}\frac{1}{N} (\sum^{N+1}_{t=1} \mathbb{E}_{s_k\sim P(\cdot|\rho, \pi_d, t)} L(s_k)-\sum^{N}_{t=0} \mathbb{E}_{s_k\sim P(\cdot|\rho, \pi_d, t)} L(s_k))\\
      =& \lim_{N\to\infty}\frac{1}{N} \left(\mathbb{E}_{s_k\sim P(\cdot|\rho, \pi_d, N+1)} L(s_{k})-\mathbb{E}_{s_k\sim \rho(\cdot)} L(s_k)\right)
\end{aligned}$$
Consequently, it can be obtained that:
\begin{equation}\label{equation:lim}
    \lim_{N\to\infty}\frac{1}{N} \left(\mathbb{E}_{s_k\sim  P(\cdot|\rho, \pi_d, N+1)} L(s_k)-\mathbb{E}_{s_k\sim \rho(\cdot)} L(s_k)\right) \leq -\alpha_3\mathbb{E}_{s_k\sim q_{\pi_d}(\cdot)}\s{C}(s_k)
\end{equation}
Since $L(s_k)\geq 0$, (\ref{equation:lim}) leads to:
$$\mathbb{E}_{s_k\sim q_{\pi_d}(\cdot)}\s{C}(s_k)\leq \frac{1}{\alpha_3}\lim_{N\to\infty}\frac{1}{N} \mathbb{E}_{s_k\sim \rho(\cdot)} L(s_k)$$
Furthermore, since $\mathbb{E}_{s_k\sim \rho(\cdot)} L(s_k)$ is bounded by $\alpha_2 \s{C}(s_k)$, it holds that:
   $$ \mathbb{E}_{s_k\sim q_{\pi_d}(\cdot)}\s{C}(s_k)=0$$
which implies $\lim_{k\to \infty} \mathbb{E}_{s_k\sim P(\cdot|\rho, \pi_d, k}) \s{C}(s_k)=0$ and concludes the proof.
\end{proof}

Theorem~\ref{th:lya} provides sufficient conditions for the MDP $\mathcal M$ comprising multiple interconnected subsystems to be stable in mean cost. Conditions \eqref{eq:lyapunov2} and~\eqref{eq:lyapunov3} essentially require that the expected value of the Lyapunov function for the whole stochastic system consistently decreases. This requirement can be verified through extensive data analysis, allowing for its integration into RL frameworks. Furthermore, the validation of \eqref{eq:lyapunov3} requires only a minimal level of information exchange among the subsystems by utilizing a set of Lyapunov functions for subsystems, which will be shown in Section~\ref{sec:algorithm}.

\section{Distributed Lyapunov actor-critic method}
\label{sec:algorithm}
\begin{figure}
    \centering
    \includegraphics[width=0.8\textwidth]{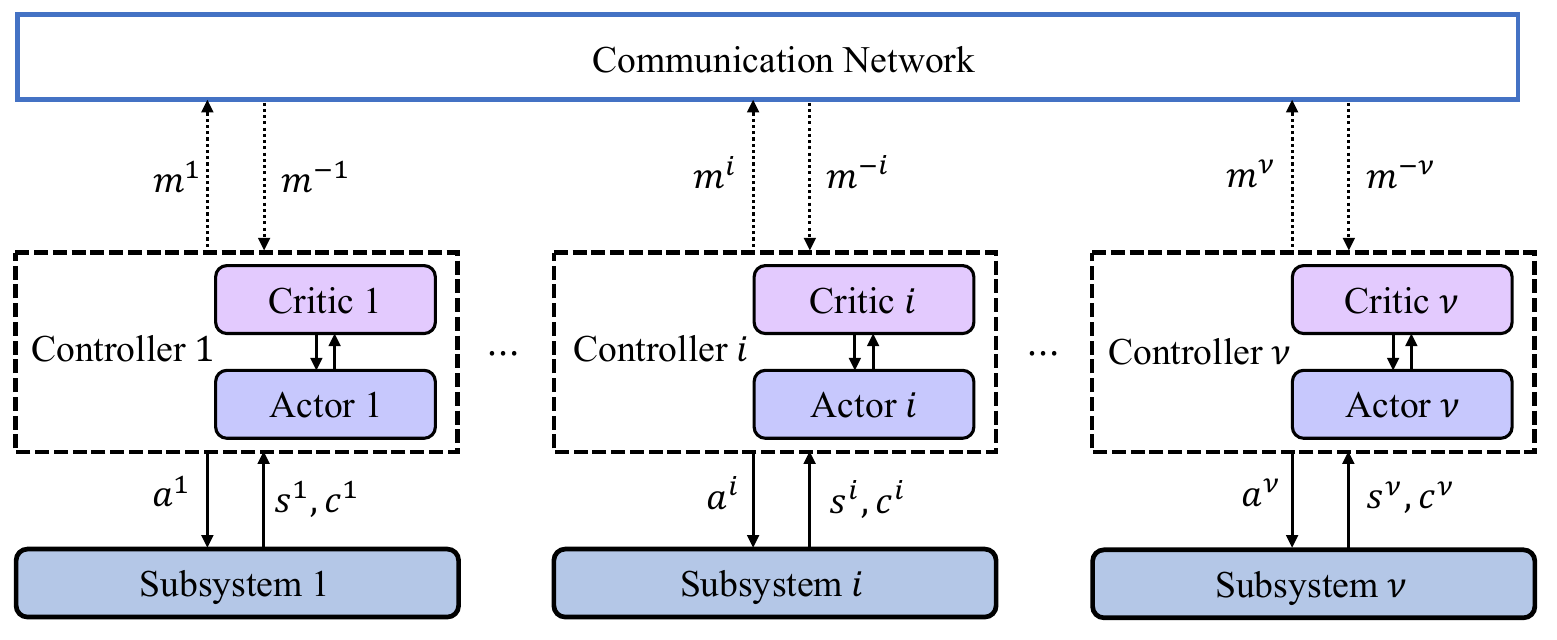}
    \caption{A graphic illustration of the distributed training paradigm.}
    \label{fig:scheme}
\end{figure}

We utilize Theorem~\ref{th:lya} to guide the design of a distributed Lyapunov actor-critic (DLAC) scheme, with a distributed training and decentralized execution paradigm. As illustrated in Figure \ref{fig:scheme}, during the distributed training phase, an RL-based controller is deployed for each subsystem. Controller $i$ for the $i$th subsystem, $i\in\mathcal V$, consists of a critic and an actor, denoted by Critic $i$ and Actor $i$, respectively. 
The critics are trained to evaluate the future cumulative discounted stage costs for the subsystems given local current states and actions. 
Under the guidance of the critics, the actors for the subsystems are trained to minimize this expected cumulative discounted stage costs.

During the training phase, each controller interacts with its corresponding subsystem through actions generated by its actor and state feedback from the subsystem. Meanwhile, local controllers exchange information with each other through the communication network. Based on the sampled local data and the exchanged information, actors and critics update their policies online.
Upon convergence of the training phase for the critics and actors, further information exchange among the local controllers becomes unnecessary, allowing each controller to operate in a decentralized fashion. Ultimately, these RL controllers, though trained in a distributed manner, support decentralized execution during online implementation.

\subsection{Distributed training of the critics}
The critics of the subsystems play two important roles. First, they evaluate the future cumulative discounted stage costs for the subsystems given local state-action pairs and guide the training of the actors to optimize these local values. Second, the evaluation functions of critics provide the basis for the actors to ensure the Lyapunov-based conditions in Theorem~\ref{th:lya} and facilitate the analysis of the closed-loop stability of the whole system. 

Similar to the centralized actor-critic framework \cite{sutton2018reinforcement}, for the $i$th subsystem, $i\in\s{V}$, the critic evaluation function 
$Q_{\pi^i}^i$ takes the local state $s^i_k$ and action $a^i_k$ as inputs and estimates the expected future discounted accumulated costs following policy $\pi^i$, which is defined as:
\begin{equation}\label{eq:bellman}
    Q^i_{\pi^i}(s^i_k, a^i_k) \triangleq \s{C}^i(s^i_k)+ \gamma \m{E}_{a^i_{k+1}\sim \pi^i(\cdot|s^i_{k+1}), s_{k+1}^i\sim P^i(\cdot|s_k^i, a_k^i)}\left[Q^i_{\pi^i}(s^i_{k+1}, a^i_{k+1})\right]
\end{equation}
where $0<\gamma<1$ is a discount factor.  
Note that due to the interactions among subsystems, the value of $Q^i_{\pi^i}$ for the $i$th subsystem, $i\in\mathcal V$, is influenced by the controllers of its interacting subsystems through $P^i$. 

Considering the continuity of state and action spaces, we employ a neural network $\hat{Q}^i(s_k^i, a_k^i; \tilde{w}^i)$, where $\tilde{w}^i$ denotes the trainable parameters, to approximate the critic evaluation function $Q_{\pi^i}^i$. The neural networks are trained to minimize the temporal difference error derived from \eqref{eq:bellman}. Formally, the loss function for the $i$th critic neural network, $i\in\s{V}$, is designed as follows:
\begin{equation}
    \texttt{J}^i_{critic}(\tilde{w}^i) = \frac{1}{2} \m{E}_{(s^i_k, a^i_k, c^i_k, s^i_{k+1})\sim \s{D}^i, a_{k+1}^i\sim \pi^i(\cdot|s_{k+1}^i)} [(\hat{Q}^i(s_k^i, a_k^i; \tilde{w}^i) - c^i_{k} - \gamma \hat{Q}^i(s^i_{k+1}, a^i_{k+1}; \tilde{w}^i))^2]
    \label{loss_c}
\end{equation}
where $\s{D}^i\triangleq\{(s^i_k, a^i_k, c^i_k, s^i_{k+1})\}_{k=1}^N$ denotes the dataset composed of data sampled from the $i$th subsystem, with $N$ being the size of dataset.

To incorporate conditions provided in Theorem~\ref{th:lya} into the distributed actor-critic framework, we designate the state value as the local Lyapunov function for each subsystem. The state value function effectively estimates the future cumulative discounted costs and can seamlessly integrate Lyapunov approaches into the RL framework. Previous studies have demonstrated the viability of the state value function as a Lyapunov function \cite{han2020actor,chow2018lyapunov}. Thus, we formalize the local Lyapunov function for the $i$th subsystem, $i\in\s{V}$, at state $s^i_k$ as:
\begin{equation}
    L^i(s_k^i) = \m{E}_{a^i_k\sim \pi^i(\cdot|s_k^i)}\hat{Q}^i(s_k^i, a_k^i; \tilde{w}^i)\label{eq:eq}
\end{equation}

According to the Robbins-Monro Theorem \cite{robbins1951stochastic}, the local Lyapunov function $L^i(s_k^i)$, defined as the expectation over actions sampled from the policy $\pi^i(a^i_k|s_k^i)$, can be approximated in practice by sampling a single action $a_k^i$ from policy $\pi^i(a^i_k|s_k^i)$. Thus, $L^i(s_k^i)$ can be computed effectively as $\hat{Q}^i(s_k^i, a_k^i; \tilde{w}^i)$, using the sampled action $a_k^i$,
which obviates the need for explicit expectation calculations during implementation.

\subsection{Distributed training of the actors}\label{actor}

We approximate the actor policy for the $i$th subsystem, $i\in\s{V}$, by a neural network parameterized by $\tilde{\theta}^i$.
This neural network models the policy $\pi^i$ as a Gaussian distribution, encoding the state $s^i_k$ into the action mean $u_k^i(s^i_k; \tilde{\theta}^i)\in \mathbb{R}^n$ and the action standard deviation $\sigma_k^i(s^i_k; \tilde{\theta}^i)\in \mathbb{R}^n$.
To generate action instances, the model introduces a noise vector $\epsilon^i_k\in \mathbb{R}^n$, sampled from a multivariate normal distribution $\mathcal{N}(\mathbf{0}, \mathbf{I})$, resulting in the action instance $a^i_k = f^i(s^i_k; \epsilon^i_k, \tilde{\theta}^i) = u_k^i(s^i_k; \tilde{\theta}^i) + \epsilon^i_k\odot \sigma_k^i(s^i_k; \tilde{\theta}^i)$.
Consequently, the action $a^i_k$ for the $i$th subsystem follows the encoded Gaussian distribution $\mathcal{N}(u_k^i(s^i_k; \tilde{\theta}^i), (\sigma_k^i(s^i_k; \tilde{\theta}^i))^2)$.

The Lyapunov-based conditions proposed in Theorem~\ref{th:lya} and the critic evaluation functions are used to establish the objective functions of the actors. The overall objective of actors is to find a set of parameters $\{\tilde{\theta}^i\}_{i=1:\nu}$ such that condition \eqref{eq:lyapunov3} is ensured, that is,
\begin{equation}
\label{eq:condition1}
    \sum_{i=1}^{\nu}\m{E}_{(s^i_k, a^i_k, c^i_k, s^i_{k+1})\sim\s{D}^i}\left[\hat{Q}^i(s^i_{k+1}, f^i(s^i_{k+1}; \epsilon^i_{k+1},\tilde{\theta}^i);\tilde{w}^i)-\hat{Q}^i(s^i_k, a^i_k;\tilde{w}^i)+\alpha_3  {c^i_{k}}\right]\leq 0
\end{equation}

For any $i\in\s{V}$, the training of Actor $i$ and Critic $i$ is conducted in a distributed manner.
In this setup, the information that the controller of the $i$th subsystem sends to the communication network is designed to be a scalar variable evaluating the energy-decreasing value within the subsystem, that is,
\begin{equation}
    m^i_k \triangleq \hat{Q}^i(s^i_{k+1}, f^i(s^i_{k+1}; \epsilon^i_{k+1}, \tilde{\theta}^i);\tilde{w}^i)-\hat{Q}^i(s^i_k, a^i_k;\tilde{w}^i)+\alpha_3  {c^i_{k}}
    \label{eq:message}
\end{equation} 
The information that the controller of the $i$th subsystem receives from the communication network is the sum of the scalar variables shared by the other controllers, that is,
\begin{equation}
    m^{-i}_k = \sum_{j\in\s{V},j\neq i} m^j_k
\end{equation}
Following this, the constraint (\ref{eq:condition1}) for the $i$th subsystem can be re-written as an inequality with information known to Actor $i$ as:
\begin{equation}
    \m{E}_{(s^i_k, a^i_k, c^i_k, s^i_{k+1})\sim\s{D}^i}\left[m^{-i}_k+ \hat{Q}^i(s^i_{k+1}, f^i(s^i_{k+1}; \epsilon^i_{k+1},\tilde{\theta}^i);\tilde{w}^i)-\hat{Q}^i(s^i_k, a^i_k;\tilde{w}^i)+\alpha_3  {c^i_{k}}\right] \leq 0
\end{equation}

To facilitate exploration during training, the entropy of the local control policy is constrained to be greater than or equal to a threshold $\s{E}^i$, that is,
\[\m{E}_{(s^i_k, a^i_k)\sim\s{D}^i} [-\log(\pi^i( f^i(s^i_k;\epsilon^i_{k},\tilde{\theta}^i)|s^i_k))] \geq \s{E}^i\]
where $\s{E}^i$ is a manually set threshold. This constraint ensures that the encoded action distribution for the $i$th subsystem maintains a certain level of stochastic property. $\s{E}^i$ is usually set as the negative of the number of local control variables \cite{han2020actor, haarnoja2018soft}.

To sum up, for any $i\in\s{V}$, the actor optimization problem for the $i$th subsystem can be described as follows:
\begin{subequations}\label{eq:actor}
\begin{align}
\text{find }&\tilde{\theta}^i  \\
\text{s.t. }    & \m{E}_{(s^i_k, a^i_k, c^i_k, s^i_{k+1})\sim\s{D}^i}\left[m^{-i}_k+ \hat{Q}^i(s^i_{k+1}, f^i(s^i_{k+1}; \epsilon^i_{k+1}, \tilde{\theta}^i);\tilde{w}^i)-\hat{Q}^i(s^i_k, a^i_k;\tilde{w}^i)+\alpha_3  {c^i_{k}}\right] \leq 0  \\
               &  \m{E}_{(s^i_k, a^i_k)\sim \s{D}^i} [-\log(\pi^i(f^i(s^i_k;\epsilon_k^i, \tilde{\theta}^i)|s^i_k))] \geq \s{E}^i \label{eq:condition2}
\end{align}
\end{subequations}

To solve the above constrained problem, we use the Lagrange method to transform \eqref{eq:actor} into an unconstrained optimization problem, where the loss function for the Actor $i$ is:
\begin{equation}
\label{eq:loss_actor}
    \texttt{J}^i_{actor}(\tilde{\theta}^i) = \m{E}_{(s^i_k, a^i_k, c^i_k, s^i_{k+1})\sim\s{D}^i} \left[
         e^{\beta^i}\log(\pi^i( f^i(s^i_k;\epsilon_k^i,\tilde{\theta}^i)|s^i_k))+  
         e^{\lambda^i}\hat{Q}^i({s^i_{k+1}}, f^i({s^i_{k+1}};\epsilon_{k+1}^i,\tilde{\theta}^i);\tilde{w}^i)
  \right]
\end{equation}
where $\beta^i, \lambda^i\in \mathbb{R}$; $e^{\beta^i}$ and $e^{\lambda^i}$ serve as Lagrange multipliers. The updates of these multipliers are controlled by minimizing their respective loss functions described below:
\begin{subequations}
\begin{align}
        &\texttt{J}^i_{actor}(\beta^i) =   -\beta^i\m{E}_{(s^i_k, a^i_k)\sim \s{D}^i} \left[\log(\pi^i(f^i(s_k^i; \epsilon_k^i, \tilde{\theta}^i)|s^i_k))+ \s{E}^i\right] \label{eq:loss_beta}\\
        &\texttt{J}^i_{actor}(\lambda^i) =   -\lambda^i\m{E}_{(s^i_k, a^i_k, c^i_k, s^i_{k+1})\sim\s{D}^i} \left[m^{-i}_k+\hat{Q}^i(s^i_{k+1}, f^i(s^i_{k+1};\epsilon_{k+1}^i,\tilde{\theta}^i); \tilde{w}^i)-\hat{Q}^i(s^i_k, a^i_k; \tilde{w}^i)+\alpha_3 c^i_{k}\right]\label{eq:loss_lambda}
    \end{align}    
\end{subequations}
The actor neural network parameter $\tilde{\theta}^i$ and Lagrange multipliers $\beta^i, \lambda^i$ are all updated using the stochastic gradient descent method with their respective loss functions. Based on the derivations above, the pseudo-code for the proposed DLAC algorithm is presented in Algorithm~\ref{algorithm}.

\begin{algorithm}
    \label{algorithm}
    \caption{The training algorithm for distributed Lyapunov actor-critic}
    \KwData{number of subsystems $\nu$ and other hyperparameters}
    \KwResult{A set of control policies $\{\pi^i$, $\forall i\in\s{V}\}$}
    Initialize parameters $\tilde{w}^i$, $\tilde{\theta}^i$, learning rates, dataset $\mathcal{D}^i=\emptyset$, $\forall i\in\s{V}$, and $j=0$\\
    \While{$j< n_{eps\_max}$}{
    Sample an episode with $n_{steps}$ length using the set of all controllers\\
    $j=j+1$\\
    \For{each $\mathcal{D}^i$}
    {Insert samples $\{(s^i_k, a^i_k, c^i_{k}, s^i_{k+1})\}_{k=0:n_{step}-1}$ into $\mathcal{D}^i$}
    \If{ $j \geq n_{eps\_min}$}{
    \If{$j$ is divisible by $n_{eps\_interval}$}
    {
         \Repeat{
            Each controller uses data randomly sampled from $\s{D}^i$ with batch size $n_{batch}$\\
            \For{each Critic $i$}
            {
                Update $\tilde{w}^i$ with the gradient of $\texttt{J}^i_{critic}(\tilde{w}^i)$\\
                Calculate $m_k^i$ by (\ref{eq:message})\\
            }
            \For{each Actor $i$}
            {
                $m^{-i}_k=\sum_{j\in\s{V}, j\neq i} m_k^j$\\
                Update $\beta^i$ with the gradient of $\texttt{J}^i_{actor}(\beta^i)$\\
                Update $\lambda^i$ with the gradient of $\texttt{J}^i_{actor}(\lambda^i)$\\
                Update $\tilde{\theta}^i$ with the gradient of $\texttt{J}^i_{actor}(\tilde{\theta}^i)$
            }            
         }
         Evaluate the performance of the controllers using test data
    }}}
\end{algorithm}

{
\begin{myremark}
The proposed method employs a distributed training and decentralized execution paradigm. During the training phase, the parameters of the neural networks for the local controllers are updated iteratively. During the online implementation phase, the local controllers operate independently, without any real-time information exchange. 
The output measurements from the subsystems are synchronously sampled. The local controllers are executed simultaneously at each sampling instant during implementation.
\end{myremark}}

{\begin{myremark}
    The current work aims to develop an RL-based distributed control scheme, of which the feasible solution can guarantee the stability of the closed-loop system. 
    Specifically, Theorem~\ref{th:lya} presents sufficient conditions for closed-loop stability in mean cost, which are used to guide the training of local controllers in the scheme.
    Meanwhile, the proposed method does not guarantee the optimality of the distributed control strategy, in the absence of information exchange among local controllers during online implementation.
\end{myremark}}

\section{Simulation results}
In this section, we evaluate the performance of the proposed method on a simulated chemical process. 

\label{sec:experiment}
\subsection{Process description}
\begin{figure}[tb]
    \centering
    \includegraphics[width=0.75\textwidth]{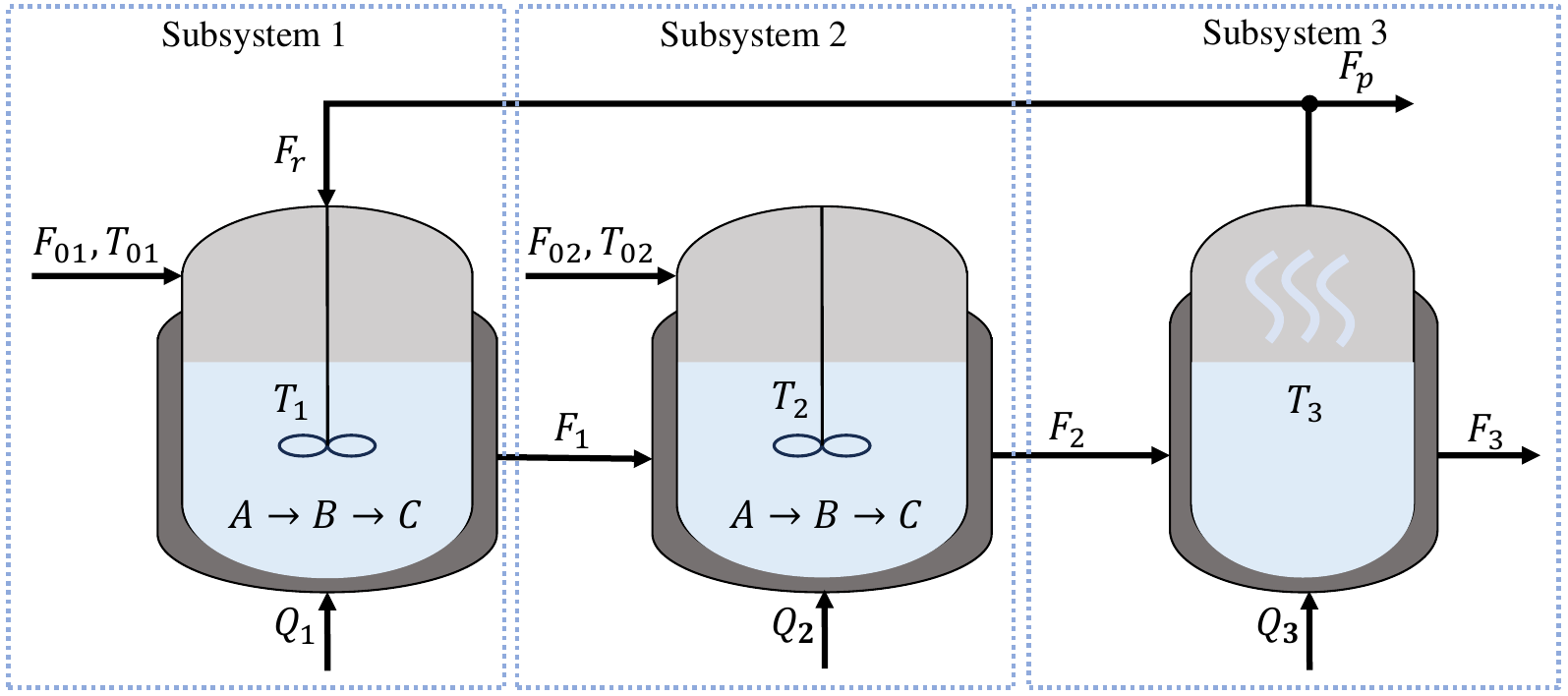}
    \caption{A schematic representation of the chemical process containing two reactors and a separator.}
    \label{fig:three tank}
\end{figure}
We use a benchmark chemical process to evaluate the proposed method. This process 
comprises two continuous stirred-tank reactors and a flash tank separator \cite{liu2009distributed, zhang2013distributed}, with a schematic shown in Figure~\ref{fig:three tank}. Two reactions take place in parallel: the conversion of reactant $A$ to the desired product $B$ ($A \to B$), and the conversion of the desired product to a side product $C$ ($B \to C$). 

In this process, a stream containing reactant $A$ is fed into the first reactor at flow rate $F_{01}$ and temperature $T_{01}$. The outlet of the first reactor enters the second reactor at flow rate $F_{1}$ and temperature $T_1$. The second reactor also receives an additional feed of $A$ at flow rate $F_{02}$ and temperature $T_{02}$. 
The output from the second tank is sent to a separator at flow rate $F_2$ and temperature $T_2$. This separator processes the effluent and recycles a portion of it back to the first tank at flow rate $F_r$ and temperature $T_3$. Each vessel is equipped with a heating jacket that can either supply or remove heat, with the heating input rate for the $i$th vessel denoted by $Q_i, i=1,2,3$. 
The state variables of this nonlinear process include the mass fractions of $A$ and $B$ (denoted by $x_{Ai}$ and $x_{Bi}$), and the temperatures (denoted by $T_i$, $i=1,2,3$) in the three tanks. That it, $s = [x_{A1}, x_{B1}, T_1, x_{A2}, x_{B2}, T_2, x_{A3}, x_{B3}, T_3]^\top$. Nine ordinary differential equations, established based on material and energy balances, describe the dynamic behaviors of this process \cite{liu2009distributed, zhang2013distributed}:
\begin{subequations}\label{eq:three tank}
\begin{align}
& \displaystyle{\frac{dx_{A1}}{dt}} = \frac{F_{10}}{V_1}(x_{A10} - x_{A1}) + \frac{F_r}{V_1}(x_{Ar} - x_{A1}) - k_1 e^{\frac{-E_1}{rT_1 }}x_{A1}  \\[0.3em]
& \displaystyle{\frac{dx_{B1}}{dt}} = \frac{F_{10}}{V_1}(x_{B10} - x_{B1}) +  \frac{F_r}{V_1}(x_{Br} - x_{B1})+  k_1 e^{\frac{-E_1}{rT_1 }}x_{A1} - k_2 e^{\frac{-E_2}{rT_1 }}x_{B1} \\[0.3em]
& \displaystyle{~~\frac{dT_{1}}{dt}} = \frac{F_{10}}{V_1}(T_{10} - T_{1}) + \frac{F_r}{V_1}(T_{3} - T_{1}) - \frac{\Delta H_1}{ c_p}k_{1}e^{\frac{-E_1}{rT_1}}x_{A1} - \frac{\Delta H_2}{ c_p}k_{2}e^{\frac{-E_2}{rT_1}}x_{B1} +  \frac{Q_1}{\rho c_pV_1} \\[0.3em]
& \displaystyle{\frac{dx_{A2}}{dt}} = \frac{F_{1}}{V_2}(x_{A1} - x_{A2}) + \frac{F_{20}}{V_2}(x_{A20} - x_{A2}) - k_1 e^{\frac{-E_1}{rT_2 }}x_{A2}  \\[0.3em]
& \displaystyle{\frac{dx_{B2}}{dt}} = \frac{F_{1}}{V_2}(x_{B1} - x_{B2}) + \frac{F_{20}}{V_2}(x_{B20} - x_{B2})+  k_1 e^{\frac{-E_1}{rT_2 }}x_{A2} - k_2 e^{\frac{-E_2}{rT_2}}x_{B2}, \\[0.3em]
& \displaystyle{~~\frac{dT_{2}}{dt}} = \frac{F_{1}}{V_2}(T_{1} - T_{2}) + \frac{F_{20}}{V_2}(T_{20} - T_{2}) - \frac{\Delta H_1}{ c_p}k_{1}e^{\frac{-E_1}{rT_2}}x_{A2} - \frac{\Delta H_2}{ c_p}k_{2}e^{\frac{-E_2}{rT_2}}x_{B2} +  \frac{Q_2}{\rho c_pV_2} \\[0.3em]
& \displaystyle{\frac{dx_{A3}}{dt}} = \frac{F_{2}}{V_3}(x_{A2} - x_{A3}) - \frac{(F_r+F_p)}{V_3}(x_{Ar} - x_{A3}) \\[0.3em]
& \displaystyle{\frac{dx_{B3}}{dt}} = \frac{F_{2}}{V_3}(x_{B2} - x_{B3}) - \frac{(F_r+F_p)}{V_3}(x_{Br} - x_{B3}) \\[0.3em]
& \displaystyle{~~\frac{dT_{3}}{dt}} = \frac{F_{2}}{V_3}(T_{2} - T_{3}) + \frac{Q_3}{\rho c_pV_3}+\frac{(F_{r}+F_{p})}{\rho c_{p}V_{3}}(x_{Ar}\Delta H_{\text{vap1}}+x_{Br}\Delta H_{\text{vap2}}+x_{Cr}\Delta H_{\text{vap3}} )
\end{align}
\end{subequations}
In the separator, the algebraic equations that model the composition of the overhead vapor related to the composition in the bottom stream are as follows:
\begin{subequations}\label{paper1: cstr: alg equ}
\begin{align}
x_{Ar} &= \frac{\alpha _A x_{A3}}{\alpha _A x_{A3} + \alpha _B x_{B3} + \alpha _C x_{C3} } \\[0.3em]
x_{Br} &= \frac{\alpha _B x_{B3}}{\alpha _A x_{A3} + \alpha _B x_{B3} + \alpha _C x_{C3} } \\[0.3em]
x_{Cr} &= \frac{\alpha _C x_{C3}}{\alpha _A x_{A3} + \alpha _B x_{B3} + \alpha _C x_{C3} }
\end{align}
\end{subequations}

The first-principles model in (\ref{eq:three tank}) is utilized to build the simulator of the process. 
The nine state variables are sampled synchronously at each time instant.
More details of the chemical process can be found in \cite{liu2009distributed, zhang2013distributed}.

\subsection{Simulation setup}
The control objective is to steer state $s$ towards a pre-defined reference state $s_{ref}$. A set of reference states $\mathcal{S}_{ref}$ containing 1,375 steady-states is collected from open-loop process simulations. 
When testing the method, three reference states $s_{ref1}$, $s_{ref2}$, $s_{ref3}$ are selected from set $\mathcal{S}_{ref}$. These states respectively contain the maximum value, middle value, and minimum value of variable $x_{A1}$ in $\mathcal{S}_{ref}$.
The initial state is uniformly distributed within the interval of $[0.8s_0, 1.2s_0]$ throughout training. 
Process disturbances $w$ are generated from the Gaussian distribution $\mathcal{N}(\mathbf{0}, \sigma_{w}^2)$ and truncated within the interval of $[-b_w, b_w]$. The disturbances are added to the state derivatives obtained from simulating (\ref{eq:three tank}). Two sets of standard deviation vectors $\sigma_{w1}, \sigma_{w2}$ and bound vectors $\text{b}_{w1}, \text{b}_{w2}$ are adopted, where $\sigma_{w1}, \text{b}_{w1}$ are used in the training phase of DLAC and $\sigma_{w2}, \text{b}_{w2}$ are used in the robustness-evaluation of DLAC. 
The actions are required to be bounded by the lower bound $a_{low}$ and the higher bound $a_{high}$. 
The state-related parameters $s_{ref1}$, $s_{ref2}$, $s_{ref3}$, $s_0$, $\sigma_{w1}$, $\sigma_{w2}$, $\text{b}_{w1}$, $\text{b}_{w2}$ are listed in Table~\ref{table:state}. The action-related parameters $a_{ref1}$, $a_{ref2}$, $a_{ref3}$, $a_{low}$, $a_{high}$ are listed in Table~\ref{table:action}.

\begin{table}[tb]
  \renewcommand\arraystretch{1.5}
  \caption{The value setting related to system states}\vspace{2mm}
  \label{table:state}
  \centering
  \resizebox{\textwidth}{!}{
    \begin{tabular}{ c c c c c c c c c c }
      \toprule
         &  $x_{A1} $ & $x_{B1}$ &  $T_{1}(\text{K})$ & $x_{A2}$ & $x_{B2}$ & $T_{2}(\text{K})$ & $x_{A3}$ & $x_{B3}$ & $T_{3}(\text{K})$  \\
        \midrule
        $s_{ref1}$ &
        0.3628&	0.5961&	451.6020&	0.3768&	0.5817&	442.6132&	0.1623&	0.7490&	445.0317 \\
        $s_{ref2}$ &
        0.2063&	0.6748&	474.5915&	0.2273&	0.6546&	464.9403&	0.0794&	0.7035&	470.7400 \\
        $s_{ref3}$ &
        0.0496&	0.4003&	533.1381&	0.0686&	0.3943&	525.2897&	0.0155&	0.2858&	531.8112 \\
        $s_{0}$ &
        0.1763& 0.6731& 480.3165& 0.1965& 0.6536& 472.7863& 0.0651& 0.6703& 474.8877 \\
        $\sigma_{w1}$ &
        0.01 & 0.01 & 0.5      & 0.01 & 0.01 & 0.5      & 0.01 & 0.01 & 0.5 \\
        $\sigma_{w2}$ &
        1 & 1 & 50      & 1 & 1 & 50      & 1 & 1 & 50 \\
        $\text{b}_{w1}$ &
        5 & 5 & 5      & 5 & 5 & 5      & 5 & 5 & 5 \\
        $\text{b}_{w2}$ &
        500 & 500 & 500      & 500 & 500 & 500      & 500 & 500 & 500 \\
        \bottomrule
    \end{tabular}
    }
\end{table}

Data preprocessing is utilized. The state variables representing the mass fractions of $A$ and $B$ in the three vessels are scaled from $[0, 1]$ to $[-1, 1]$. Each state related to the temperatures is normalized using the mean and standard deviation of the respective state values in the reference state set $\mathcal{S}_{ref}$.
This data processing step aims to enhance the stability and performance of the learning process.

\begin{small}    
\begin{table}[tb]
  \renewcommand\arraystretch{1.5}
  \caption{The value setting related to system actions}\vspace{2mm}
  \label{table:action}
  \centering
    \begin{tabular}{ c c c c }
        \toprule
        Input &  $Q_1(\text{kJ/h})$ & $Q_2(\text{kJ/h})$ &  $Q_3(\text{kJ/h})$\\
        \midrule
        $a_{ref1}$ &
        3.982$\times 10^6$ & 1.114$\times 10^6$ & 1.059$\times 10^6$   \\
        $a_{ref2}$ &
        2.062$\times 10^6$ & 9.247$\times 10^5$ & 3.823$\times 10^6$   \\
        $a_{ref3}$ &
        3.829$\times 10^6$ & 4.129$\times 10^5$ & 4.044$\times 10^6$   \\
        $a_{low}$ &
        6.496$\times 10^5$ & 2.240$\times 10^5$ & 6.496$\times 10^5$   \\
        $a_{high}$ &
        4.872$\times 10^6$ & 1.680$\times 10^6$ & 4.872$\times 10^6$   \\
        \bottomrule
    \end{tabular}
\end{table}
\end{small}

The chemical process is decomposed into three subsystems, each of which accounts for one vessel of the process, as depicted in Figure~\ref{fig:three tank}. Accordingly, the subsystem states $s^1, s^2, s^3$ and subsystem actions $a^1, a^2, a^3$ for the three subsystems are:
\begin{align}
    s^1 = [x_{A1}, x_{B1}, T_1]^\top, & \quad a^1 = Q_1 \\
    s^2 = [x_{A2}, x_{B2}, T_2]^\top, & \quad a^2 = Q_2 \\
    s^3 = [x_{A3}, x_{B3}, T_3]^\top, & \quad a^3 = Q_3
\end{align}

\subsection{Evaluation}

The proposed DLAC is evaluated considering the following aspects:\vspace{-1mm}
\begin{itemize}
    \item Convergence: whether the training algorithm can converge with different initial parameters;\vspace{-1mm}
    \item Stability: whether the DLAC scheme can steer the state trajectory towards the reference;\vspace{-1mm}
    \item Performance: whether the DLAC scheme tends to minimize the cumulative stage costs;\vspace{-1mm}
    \item Robustness: whether the DLAC scheme is robust against unseen process disturbances;\vspace{-1mm}
    \item Tracking capability: whether the DLAC scheme can generalize to track multiple references.
\end{itemize}
The hyperparameters of the DLAC are listed in Table~\ref{table:param}. The values of these hyperparameters, including threshold, learning rates, and discount factor, are selected through trial-and-error analysis to achieve a good trade-off between training loss and training convergence speed.
We compare the DLAC scheme with open-loop control and nonlinear MPC (NMPC) based on the first-principles model (\ref{eq:three tank}), with the NMPC scheme developed based on \cite{han2013nonlinear}. Open-loop control is conducted using the steady-state inputs corresponding to the reference states.

\subsubsection{Convergence and performance}

\begin{table}[tb]
  \renewcommand\arraystretch{1.5}
  \caption{Hyperparameters of DLAC for the chemical process}\vspace{2mm}
  \label{table:param}
  \centering
  \resizebox{\textwidth}{!}{
    \begin{tabular}{ c c c c}
        \toprule
        Hyperparameters &  Value & Hyperparameters &  Value\\
        \midrule
        Sampling time $\Delta_t$ & 0.005 h & Episode length $n_{steps}$ & 500 \\
        Replay buffer size $n_{data\_max}$ & 4000 & Maximal episodes number $n_{eps\_max}$& 8,000 \\
        Batch size $n_{batch}$ & 256 & Training interval $n_{eps\_interval}$ & 50 \\
        Update numbers per training $n_{update}$ & 1000 & Training start episode $n_{eps\_min}$ & the 400th episode\\
        Lyapunov coefficient $\alpha_3$   & 0.5 & Threshold $\mathcal{E}$ & -1\\
        Actor learning rate & 0.0001 & Lagrange multipliers learning rate & 0.0001\\
        Critic learning rate & 0.0001 & Soft update coefficient & 5$\times 10^{-6}$\\
        Discount factor $\gamma$ & 0.95 &&\\
        \bottomrule
    \end{tabular}
    }
\end{table}

In this section, we examine the convergence of DLAC. The accumulated costs and critic losses during training are used to validate the convergence of the DLAC. During training, controllers are evaluated based on their ability to track 11 reference states that are uniformly distributed within the reference set $\s{S}_{ref}$. Two metrics are adopted for the evaluation of controllers: 
\begin{itemize}
    \item the maximum steady-state tracking error for temperatures $T_1, T_2, T_3$;
    \item the maximum steady-state tracking error for mass fractions of reactants.
\end{itemize}
\begin{figure}
    \centering
    \subfigure[Trajectory of accumulated costs during training]{
    \label{fig:costs}
    \includegraphics[width=0.48\textwidth]{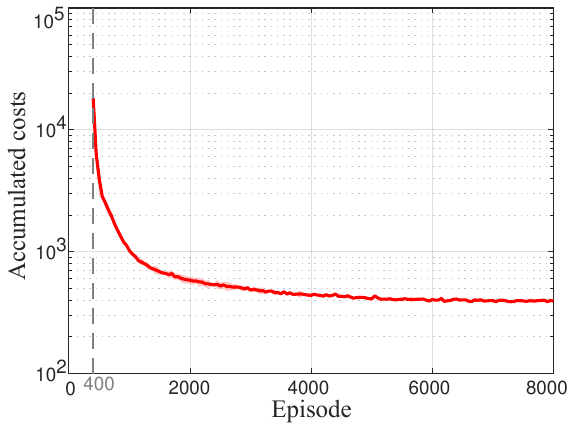}}
    \subfigure[Trajectory of the critic's loss during training]{
    \label{fig:loss}
    \includegraphics[width=0.49\textwidth]{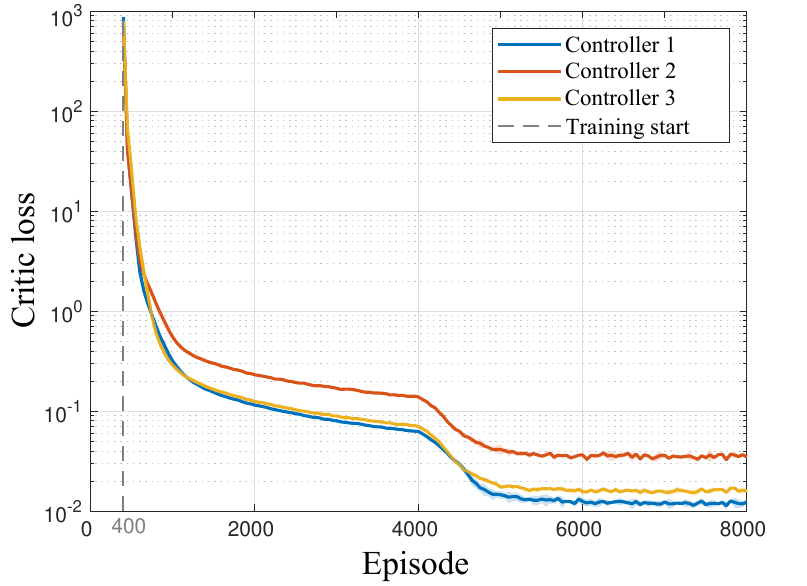}}
    \caption{Evaluation of convergence over episodes during training. Figure~\ref{fig:costs} displays the accumulated costs of DLAC, while Figure \ref{fig:loss} shows the critic losses. DLAC was trained five times with random initial states. The gray dashed line marks the start of training, the solid lines represent the mean values, and the shaded areas indicate the variance across the five training trials.}
\end{figure}

\begin{figure}[t]
    \centering
    \subfigure[Temperature tracking error]{
        \label{fig:test2}
        \includegraphics[width=0.485\textwidth]{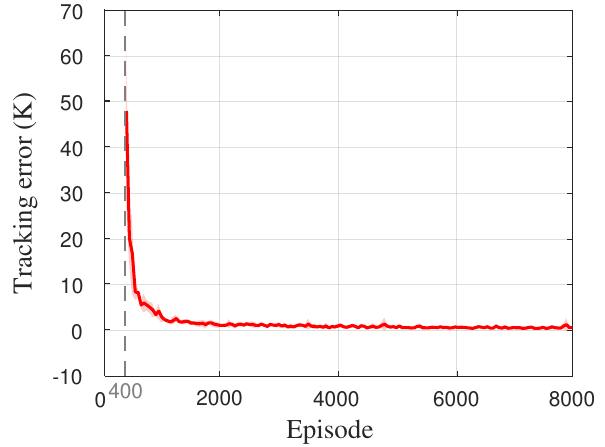}}
    \subfigure[Mass fraction tracking error]{
        \label{fig:test3}
        \includegraphics[width=0.485\textwidth]{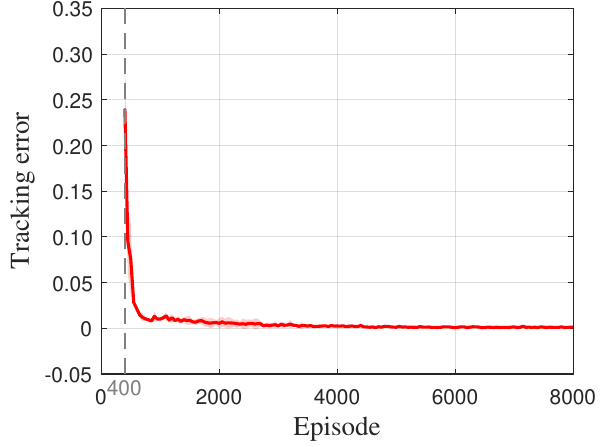}}
    \caption{Evaluation of the controller tracking performance during training. Figure~\ref{fig:test2} shows the maximum static state tracking error for the variables representing temperatures; Figure~\ref{fig:test3} shows the maximum static state tracking error for the variables representing mass fractions. DLAC is trained 5 times with random initial states. The solid lines represent the mean values, and the shaded areas indicate the variance across the 5 training trials.}
    \label{fig:test}
\end{figure}

\begin{figure}
    \centering
    \includegraphics[width=1\textwidth]{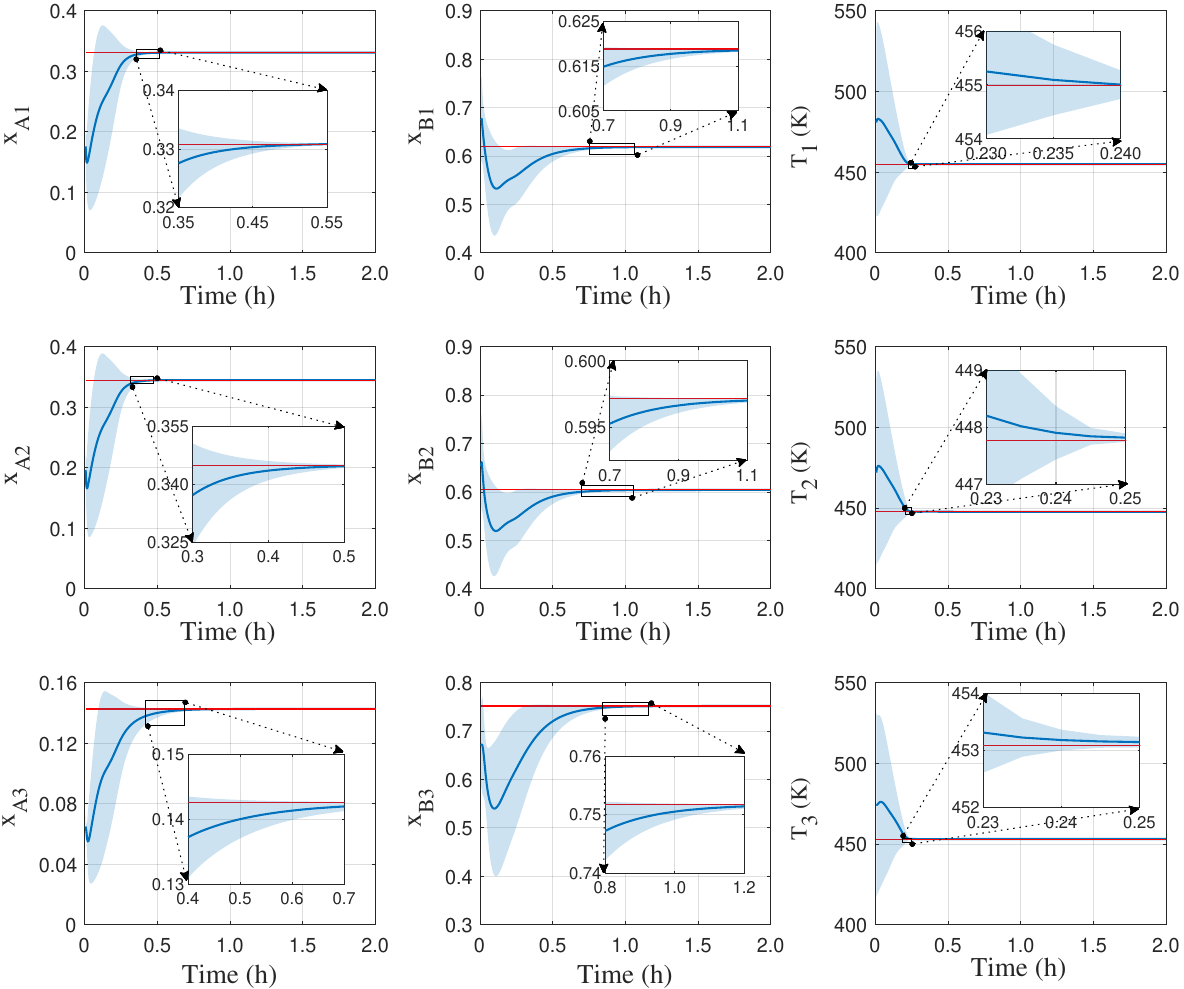}
    \caption{
    State trajectories under the process disturbances with standard deviation $\sigma_{w1}$, where the red lines represent a reference state randomly selected from the reference set $\mathcal{S}_{ref}$, the solid blue lines depict the mean of the 100 simulated trajectories with random initial states, the light blue shaded areas represent one standard deviation from the mean, and the magnified sections within each subplot provide a closer look at the critical points.
    }
    \label{fig:small_noise}
\end{figure}

The trajectory of accumulated costs during the training process is presented in Figure~\ref{fig:costs}, with the vertical axis shown on a logarithmic scale. From the 400th episode, the accumulated costs decrease rapidly over the next 1600 episodes. In the final 2000 episodes, the accumulated costs remain almost constant, even on the logarithmic scale, indicating the convergence of DLAC across random initial parameters.
Figure~\ref{fig:loss} depicts the critic losses on a logarithmic scale along these episodes for the three controllers. The training losses decrease rapidly during the initial phases and stabilize in the latter part of the training process. The shaded variance region remains relatively tight throughout the training phase, which indicates consistent performance across different state initializations.
The evaluations for the two types of steady-state error are demonstrated in Figure~\ref{fig:test}. Similar to the accumulated costs and critic losses, the rapidly declining results of these two metrics indicate good convergence of the proposed algorithm.

\begin{figure}
    \centering
    \includegraphics[width=1\textwidth]{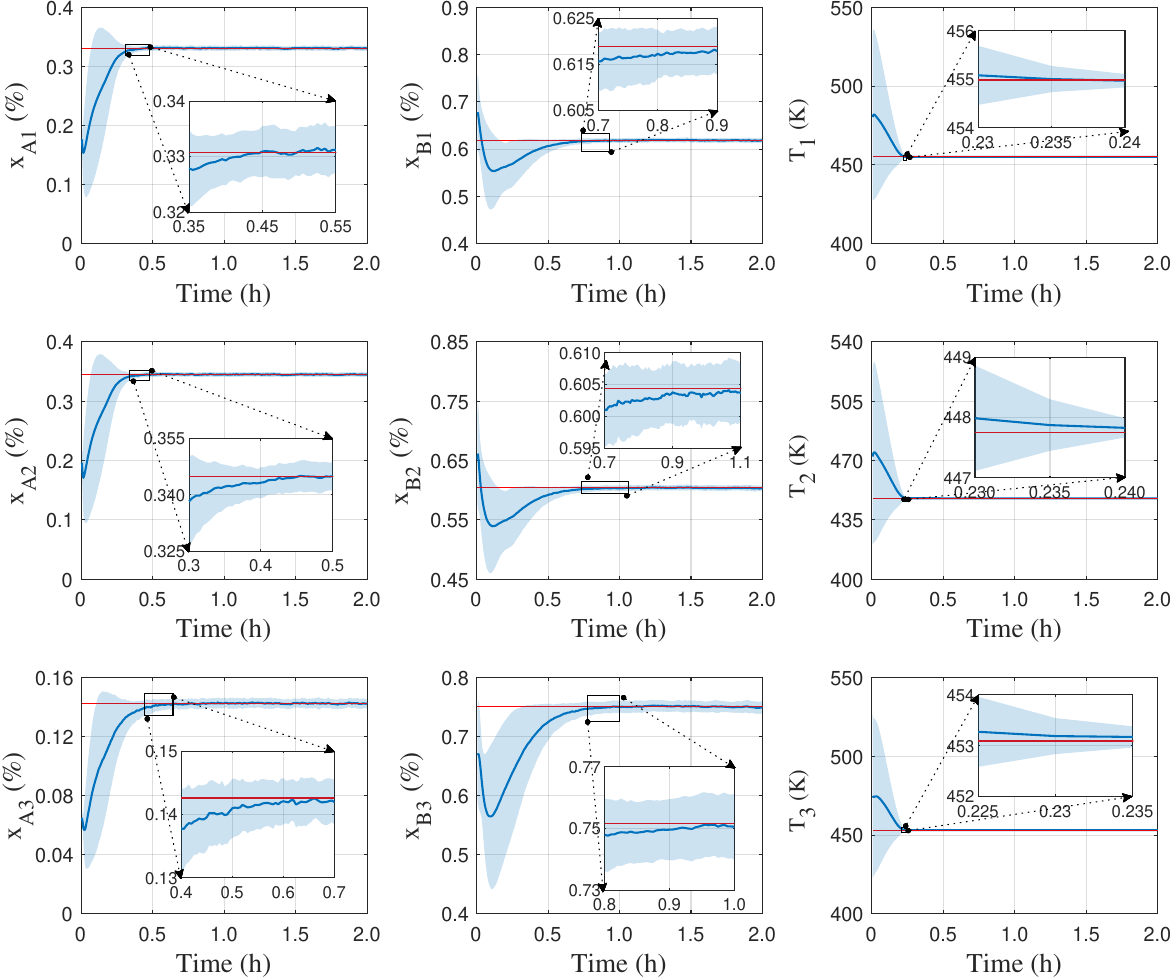}
    \caption{
    State trajectories under larger process disturbances with standard deviation $\sigma_{w2}$ are illustrated, where the red lines represent a reference state randomly selected from the reference set $\mathcal{S}_{ref}$. The solid blue lines depict the mean of 100 simulated trajectories with random initial states. The light blue shaded areas indicate one standard deviation from the mean. Magnified sections within each subplot provide a closer look at critical points.}
    \label{fig:large_noise}
\end{figure}

\begin{figure}
    \centering
    \includegraphics[width=1\textwidth]{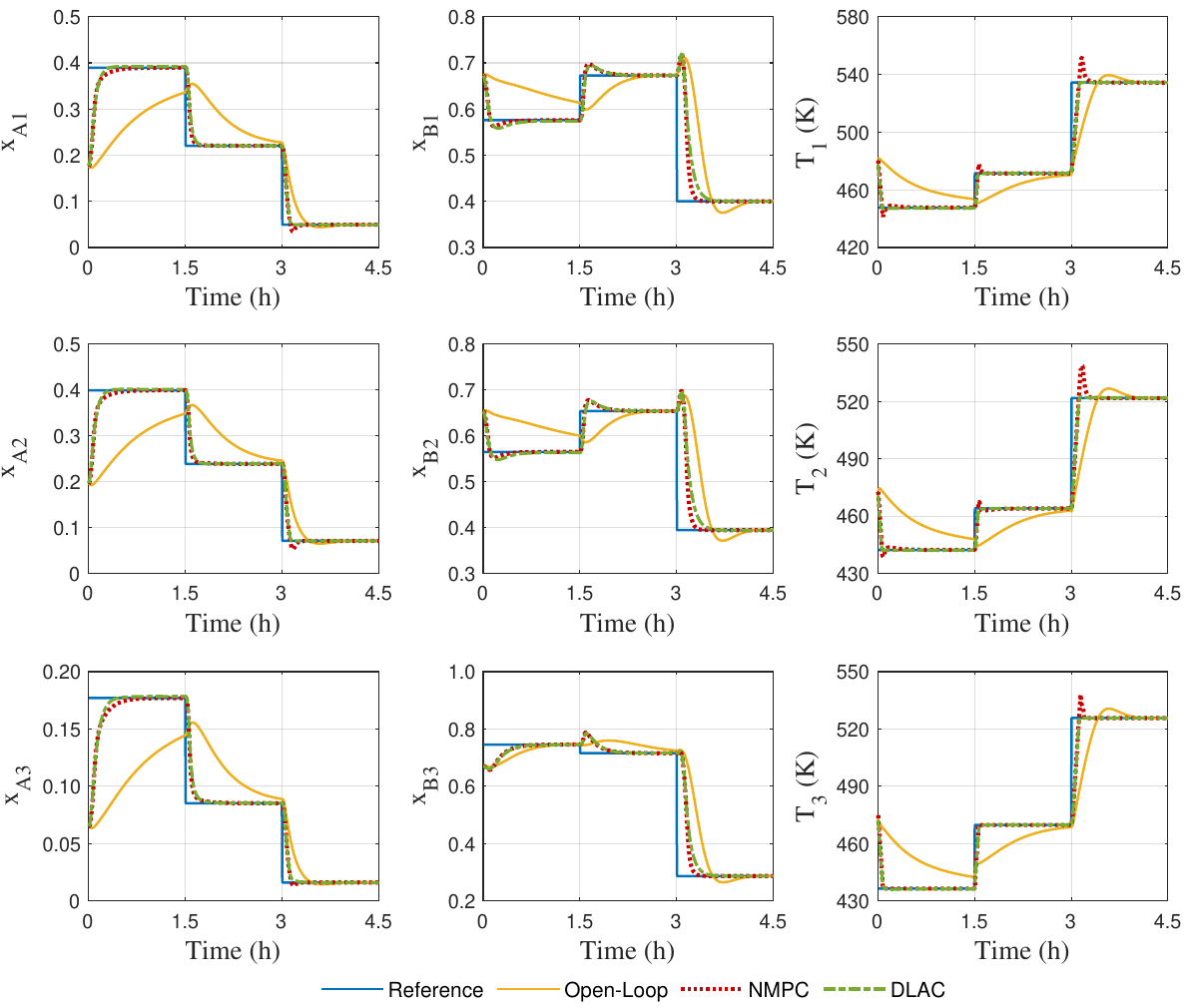}
    \caption{Comparison of dynamic tracking performance between DLAC and baselines under process disturbances with standard deviation $\sigma_{w1}$ to follow piecewise continuous reference states $s_{ref1}$, $s_{ref2}$, and $s_{ref3}$. The blue dashed line is the reference trajectory; the yellow solid line represents the state trajectory under open-loop control; the red solid line depicts the state trajectory based on NMPC; and the green solid line shows the state trajectory based on DLAC.}
    \label{fig:compare_small_noise}
\end{figure}

\begin{figure}
    \centering
    \includegraphics[width=1\textwidth]{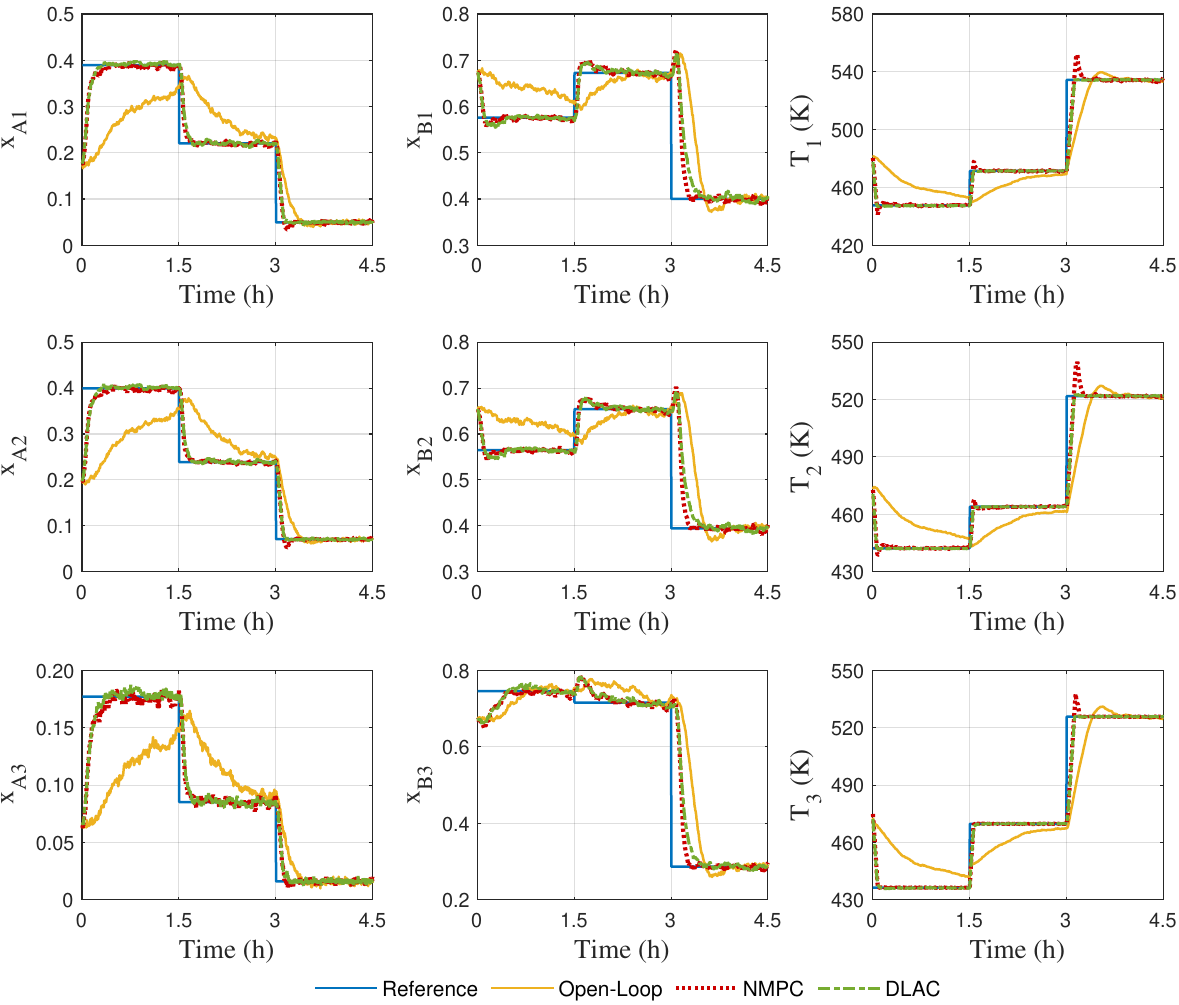}
    \caption{
    Comparison of dynamic tracking performance between DLAC and baselines under larger process disturbances with standard deviation $\sigma_{w2}$ to follow piecewise continuous reference states $s_{ref1}$, $s_{ref2}$, and $s_{ref3}$. The blue dashed line represents the reference trajectory; the yellow solid line indicates the state trajectory under open-loop control; the red solid line depicts NMPC's state trajectory; and the green solid line shows DLAC's state trajectory.
    }
    \label{fig:compare_large_noise}
\end{figure}

\subsubsection{Stability}

We evaluate the tracking performance of the trained controllers using 100 different random initial states.
Figure~\ref{fig:small_noise} shows the state trajectories of the closed-loop system and the reference states.
The red lines represent a reference state randomly selected from the reference set $\mathcal{S}_{ref}$;  
the solid blue lines depict the mean of these 100 simulated trajectories,
while the light blue shaded areas represent one standard deviation from the mean. 
The system states converge progressively towards the desired reference within approximately 1 hour from different initial states and remain close to the reference thereafter.

\subsubsection{Robustness}

In this section, we evaluate the robustness of DLAC under larger process disturbances with standard deviation $\sigma_{w2}$, which is 100 times that of the disturbances with standard deviation $\sigma_{w1}$. Figure~\ref{fig:large_noise} presents the updated tracking outcomes under the new disturbances. 

As shown in Figure~\ref{fig:large_noise}, the state trajectories become noisier as the magnitude of the disturbances increases. Despite the increased noise levels, the mean trajectories (depicted by blue lines) generally converge towards the reference lines across all dimensions. This indicates that the control system effectively compensates for these perturbations to track the desired states.

\subsubsection{Tracking of multiple references}
In this section, we consider changes in the reference states and evaluate the tracking performance of the DLAC scheme accordingly. 
Specifically, the reference state changes from $s_{ref1}$ to $s_{ref2}$ and then to $s_{ref3}$, with dwelling intervals of 1.5 hours. 
Figure~\ref{fig:compare_small_noise} and Figure~\ref{fig:compare_large_noise} present the tracking performance of the DLAC scheme tracking piecewise continuous reference states, subject to process disturbances of standard deviation $\sigma_{w1}$ and larger process disturbances with standard deviation $\sigma_{w2}$, respectively. 
Both NMPC and DLAC outperform open-loop control. Compared to NMPC, the proposed DLAC provides a shorter settling time with moderate overshoot, particularly for $T_1$, $T_2$, and $T_3$. These results demonstrate that the proposed method is able to provide good control performance without requiring a first-principles dynamic model of the process, which is one of the advantages of the proposed method.

\section{Concluding remarks}
\label{sec:conclusion}
An RL-based distributed control method, with a paradigm of distributed training and decentralized execution, was proposed. Data-based closed-loop stability was analyzed, and Lyapunov-based conditions were derived to guide the design of local controllers within this distributed framework.
The training phase of the local controllers requires only the exchange of scalar-valued information. Once the training phase is completed, controllers can be implemented in a decentralized manner, that is, they do not need to exchange information with neighboring controllers during online control implementation.
Simulations on a reactor-separator process were conducted to illustrate the proposed method. The method demonstrated good performance in terms of training convergence, set-point tracking performance, and robustness against larger disturbances.  The results also confirmed that the trained controllers can track multi-level references without the need for {retraining} the RL-based controllers.

Future research may focus on investigating sequential communication strategies to further reduce communication overhead during the training phase of RL-based distributed algorithms.
Additionally, it would be relevant to develop and validate an optimal RL-based distributed control approach with stability guarantees, as well as analyze its convergence to centralized RL. Furthermore, exploring the application of partially observable Markov decision processes (POMDPs) \cite{lee2021multi, liu2022sample} could provide insights into scenarios where only partial states can be measured online.

\section*{Acknowledgment}
This research is supported by the Ministry of Education, Singapore, under its Academic Research Fund Tier 1 (RG63/22), jointly with the Maritime Energy and Sustainable Development Centre of Excellence. This research is also supported by the National Research Foundation, Singapore, and PUB, Singapore’s National Water Agency under its RIE2025 Urban Solutions and Sustainability (USS) (Water) Centre of Excellence (CoE) Programme, awarded to Nanyang Environment \& Water Research Institute (NEWRI), Nanyang Technological University, Singapore (NTU). 


\begin{thebibliography}{10}



\bibitem{daoutidis2016JPC}
P. Daoutidis, M. Zachar, and S. S. Jogwar.
\newblock Sustainability and process control: A survey and perspective.
\newblock {\em Journal of Process Control}, 44:184--206, 2016.

\bibitem{CACE2023}
W. Tang, P. Carrette, Y. Cai, J. M. Williamson, and P. Daoutidis.
\newblock Automatic decomposition of large-scale industrial processes for distributed MPC on the Shell–Yokogawa Platform for Advanced Control and Estimation (PACE).
\newblock {\em Computers \& Chemical Engineering}, 178:108382, 2019.

\bibitem{yin2017automatica}
X. Yin and J. Liu.
\newblock Distributed moving horizon state estimation of two-time-scale nonlinear systems.
\newblock {\em Automatica}, 79:152--161, 2017.

\bibitem{TCST2019}
X. Yin, J. Zeng, and J. Liu.
\newblock Forming distributed state estimation network from decentralized estimators.
\newblock {\em IEEE Transactions on Control Systems Technology}, 27(6):2430--2443, 2019.

\bibitem{burg2016large}
J. M. Burg, C. B. Cooper, Z. Ye, B. R. Reed, E. A. Moreb, and M. D. Lynch.
\newblock Large-scale bioprocess competitiveness: the potential of dynamic metabolic control in two-stage fermentations.
\newblock {\em Current Opinion in Chemical Engineering}, 14:121--136, 2016.


\bibitem{daoutidis2018cace}
P. Daoutidis, W. Tang, and S. S. Jogwar.
\newblock Decomposing complex plants for distributed control: Perspectives from network theory.
\newblock {\em Computers \& Chemical Engineering}, 114:43--51, 2018.

\bibitem{christofides2013distributed}
P. D. Christofides, R. Scattolini, D. Mu{\~n}oz {de la Pe\~na}, and J. Liu.
\newblock Distributed model predictive control: A tutorial review and future research directions.
\newblock {\em Computers \& Chemical Engineering}, 51:21--41, 2013.

\bibitem{aiche2019}
X. Yin and J. Liu.
\newblock Subsystem decomposition of process networks for simultaneous distributed state estimation and control.
\newblock {\em AIChE Journal}, 65(3):904--914, 2019.

\bibitem{stewart2010cooperative}
B. T. Stewart, A. N. Venkat, J. B. Rawlings, S. J. Wright, and G. Pannocchia.
\newblock Cooperative distributed model predictive control.
\newblock {\em Systems \& Control Letters}, 59(8):460--469, 2010.

\bibitem{liu2011iterative}
J. Liu, X. Chen, D. Mu{\~n}oz de la Peña, and P. D. Christofides.
\newblock Iterative distributed model predictive control of nonlinear systems: Handling asynchronous, delayed measurements.
\newblock {\em IEEE Transactions on Automatic Control}, 57(2):528--534, 2011.


\bibitem{li2023aiche}
X. Li, A. W. K. Law, and X. Yin.
\newblock Partition-based distributed extended Kalman filter for large-scale nonlinear processes with application to chemical and wastewater treatment processes.
\newblock {\em AIChE Journal}, 69(12):e18229, 2023.

\bibitem{yin2021}
X. Yin, Z. Li, I. V. Kolmanovsky.
\newblock Distributed state estimation for linear systems with application to full-car active suspension systems.
\newblock {\em IEEE Transactions on Industrial Electronics}, 68(2):1615-1625, 2021.


{{\bibitem{sutton2018reinforcement}
R. S. Sutton.
\newblock Reinforcement learning: An introduction.
\newblock {\em MIT press}, 2018.}}

{{\bibitem{espeholt2018impala}
L. Espeholt, H. Soyer, R. Munos, K. Simonyan, V. Mnih, T. Ward, Y. Doron, V. Firoiu, T. Harley, I. Dunning, S. Legg, and K. Kavukcuoglu.
\newblock IMPALA: Scalable distributed deep-RL with importance weighted actor-learner architectures.
\newblock {\em International Conference on Machine Learning}, 1407-1416, 2018, Stockholm, Sweden.}}

\bibitem{kumar2012model}
A. S. Kumar and Z. Ahmad.
\newblock Model predictive control (MPC) and its current issues in chemical engineering.
\newblock {\em Chemical Engineering Communications}, 199(4):472--511, 2012.

\bibitem{rawlings2017model}
J. B. Rawlings, D. Q. Mayne, and M. Diehl.
\newblock Model predictive control: Theory, computation, and design.
\newblock {\em Nob Hill Publishing}, 2017.


\bibitem{liu2010distributed}
J. Liu, D. Mu{\~n}oz de la Pe{\~n}a, and P. D. Christofides.
\newblock Distributed model predictive control of nonlinear systems subject to asynchronous and delayed measurements.
\newblock {\em Automatica}, 46(1):52-61, 2010.

\bibitem{magni2006stabilizing}
L. Magni and R. Scattolini.
\newblock Stabilizing decentralized model predictive control of nonlinear systems.
\newblock {\em Automatica}, 42(7):1231--1236, 2006.

{{\bibitem{liu2010sequential}
J. Liu, X. Chen, D. Mu{\~n}oz de la Pe{\~n}a and P. D. Christofides.
\newblock Sequential and iterative architectures for distributed model predictive control of nonlinear process systems.
\newblock {\em AIChE Journal}, 
56(8):2137-2149, 2010.}}

{{
\bibitem{heidarinejad2013distributed}
M. Heidarinejad, J. Liu and P. D. Christofides.
\newblock Distributed model predictive control of switched nonlinear systems with scheduled mode transitions.
\newblock {\em AIChE Journal}, 
59(3):860-871, 2013.}}

\bibitem{zhao2023feature}
T. Zhao, Y. Zheng, and Z. Wu.
\newblock Feature selection-based machine learning modeling for distributed model predictive control of nonlinear processes.
\newblock {\em Computers \& Chemical Engineering}, 169:108074, 2023.

\bibitem{kadakia2023encrypted}
Y.A. Kadakia, A. Alnajdi, F. Abdullah, and P.D. Christofides.
\newblock Encrypted decentralized model predictive control of nonlinear processes with delays.
\newblock {\em Chemical Engineering Research and Design}, 200:312-324, 2023.

\bibitem{kadakia2024encrypted}
Y.A. Kadakia, F. Abdullah, A. Alnajdi, and P. D. Christofides. 
\newblock Encrypted distributed model predictive control of nonlinear processes.
\newblock {\em  Control Engineering Practice}, 145:105874, 2024.

\bibitem{lee2018machine}
J. H. Lee, J. Shin, and M. J. Realff.
\newblock Machine learning: Overview of the recent progresses and implications for the process systems engineering field.
\newblock {\em Computers \& Chemical Engineering}, 114:111--121, 2018.

\bibitem{shin2019reinforcement}
J. Shin, T. A. Badgwell, K. H. Liu, and J. H. Lee.
\newblock Reinforcement Learning - Overview of recent progress and implications for process control.
\newblock {\em Computers \& Chemical Engineering}, 127:282--294, 2019.

\bibitem{nian2020review}
R. Nian, J. Liu and B. Huang.
\newblock A review on reinforcement learning: Introduction and applications in industrial process control.
\newblock {\em Computers \& Chemical Engineering}, 139:106886, 2020.

\bibitem{spielberg2017deep}
S. P. K. Spielberg, R. B. Gopaluni, and P. D. Loewen.
\newblock Deep reinforcement learning approaches for process control.
\newblock {\em International Symposium on Advanced Control of Industrial Processes}, 201--206, 2017, Taipei, Taiwan.

{{
\bibitem{ma2019continuous}
Y. Ma, W. Zhu, M. G. Benton, and J. Romagnoli.
\newblock Continuous control of a polymerization system with deep reinforcement learning.
\newblock {\em Journal of Process Control}, 75:40--47, 2019.
}}

\bibitem{wang2024control}
Y. Wang and Z. Wu.
\newblock Control Lyapunov-barrier function-based safe reinforcement learning for nonlinear optimal control.
\newblock {\em AIChE Journal}, 70(3):e18306, 2024.

\bibitem{hoskins1992process}
J. C. Hoskins and D. M. Himmelblau.
\newblock Process control via artificial neural networks and reinforcement learning.
\newblock {\em Computers \& Chemical Engineering}, 16(4):241--251, 1992.


\bibitem{foerster2016learning}
J. Foerster, I. A. Assael, N. De Freitas, and S. Whiteson.
\newblock Learning to communicate with deep multi-agent reinforcement learning.
\newblock {\em Advances in Neural Information Processing Systems}, 29, 2016.

\bibitem{canese2021multi}
L. Canese, G. C. Cardarilli, L. D. Nunzio, R. Fazzolari, D. Giardino, M. Re, and S. Spanò.
\newblock Multi-agent reinforcement learning: A review of challenges and applications.
\newblock {\em Applied Sciences}, 11(11):4948, 2021.

\bibitem{shaikhet1997necessary}
L. Shaikhet.
\newblock Necessary and sufficient conditions of asymptotic mean square stability for stochastic linear difference equations.
\newblock {\em Applied Mathematics Letters}, 10(3):111--115, 1997.

\bibitem{bolzern2010markov}
P. Bolzern, P. Colaneri, and G. De Nicolao.
\newblock Markov jump linear systems with switching transition rates: Mean square stability with dwell-time.
\newblock {\em Automatica}, 46(6):1081--1088, 2010.

\bibitem{han2020actor}
M. Han, L. Zhang, J. Wang, and W. Pan.
\newblock Actor-critic reinforcement learning for control with stability guarantee.
\newblock {\em IEEE Robotics and Automation Letters}, 5(4):6217--6224, 2020.

\bibitem{sutton2008convergent}
R. S. Sutton, H. Maei, and C. Szepesvári.
\newblock A convergent $ O (n) $ temporal-difference algorithm for off-policy learning with linear function approximation.
\newblock {\em Advances in Neural Information Processing Systems}, 21, 2008.

\bibitem{korda2015td}
N. Korda and P. La.
\newblock On TD(0) with function approximation: Concentration bounds and a centered variant with exponential convergence.
\newblock {\em International Conference on Machine Learning}, 626--634, 2015, Lille, France.

\bibitem{bhandari2018finite}
J. Bhandari, D. Russo, and R. Singal.
\newblock A finite time analysis of temporal difference learning with linear function approximation.
\newblock {\em Conference on Learning Theory}, 1691--1692, 2018, Stockholm, Sweden.


{
\bibitem{puterman2014markov}
M. L. Puterman.
\newblock Markov decision processes: Discrete stochastic dynamic programming.
\newblock {\em John Wiley \& Sons}, 2014.}

{
\bibitem{TH2008}
A. Gelman and D. B. Rubin.
\newblock Inference from iterative simulation using multiple sequences.
\newblock {\em Statistical Science},
7(4):457-472, 1992.
}

\bibitem{khalil2009lyapunov}
H. K. Khalil.
\newblock Lyapunov stability.
\newblock {\em Control Systems, Robotics and Automation}, 12:115, 2009.

\bibitem{raghunathan2013optimal}
A. Raghunathan and U. Vaidya.
\newblock Optimal stabilization using Lyapunov measures.
\newblock {\em IEEE Transactions on Automatic Control}, 59(5):1316--1321, 2013.

\bibitem{blackwell1963converse}
D. Blackwell and L. E. Dubins.
\newblock A converse to the dominated convergence theorem.
\newblock {\em Illinois Journal of Mathematics}, 7(3):508--514, 1963.
.

\bibitem{chow2018lyapunov}
Y. Chow, O. Nachum, E. Duenez-Guzman, and M. Ghavamzadeh.
\newblock A Lyapunov-based approach to safe reinforcement learning.
\newblock {\em Advances in Neural Information Processing Systems}, 31, 2018.

\bibitem{robbins1951stochastic}
H. Robbins and S. Monro.
\newblock A stochastic approximation method.
\newblock {\em The Annals of Mathematical Statistics}, 400--407, 1951.

\bibitem{haarnoja2018soft}
T. Haarnoja, A. Zhou, K. Hartikainen, G. Tucker, S. Ha, J. Tan, V. Kumar, H. Zhu, A. Gupta, P. Abbeel, A. Gupta, P. Abbeel, and S. Levine.
\newblock Soft actor-critic algorithms and applications.
\newblock {\em arXiv preprint arXiv:1812.05905}, 2018.


\bibitem{liu2009distributed}
J. Liu, D. Muñoz de la Peña, and P. D. Christofides.
\newblock Distributed model predictive control of nonlinear process systems.
\newblock {\em AIChE Journal}, 55(5):1171--1184, 2009.

\bibitem{zhang2013distributed}
J. Zhang and J. Liu.
\newblock Distributed moving horizon state estimation for nonlinear systems with bounded uncertainties.
\newblock {\em Journal of Process Control}, 23(9):1281--1295, 2013.

\bibitem{han2013nonlinear}
H. Han and J. Qiao.
\newblock Nonlinear model-predictive control for industrial processes: An application to wastewater treatment process.
\newblock {\em IEEE Transactions on Industrial Electronics}, 61(4):1970--1982, 2013.

\bibitem{lee2021multi}
H. R. Lee and T. Lee.
\newblock Multi-agent reinforcement learning algorithm to solve a partially-observable multi-agent problem in disaster response.
\newblock {\em European Journal of Operational Research},
291(1):296--308, 2021.

\bibitem{liu2022sample}
Q. Liu and C. Szepesv{\'a}ri, and C. Jin.
\newblock Sample-efficient reinforcement learning of partially observable Markov games.
\newblock {\em Advances in Neural Information Processing Systems},
35:18296--18308, 2022.



\end{thebibliography}

\end{document}